\documentclass{article}
\usepackage{blindtext}
\usepackage[a4paper, total={6in, 8in}]{geometry}

\usepackage{graphicx}%
\usepackage{multirow}%
\usepackage{amsmath,amssymb,amsfonts}%
\usepackage{amsthm}%
\usepackage{mathrsfs}%
\usepackage[title]{appendix}%
\usepackage[dvipsnames]{xcolor}%
\usepackage{textcomp}%
\usepackage{manyfoot}%
\usepackage{booktabs}%
\usepackage{algorithm}%
\usepackage{algorithmicx}%
\usepackage{algpseudocode}%
\usepackage{listings}%
\usepackage{lmodern}
\usepackage{authblk}

\usepackage[colorlinks=true, allcolors=blue]{hyperref}

\newcommand{\abs}[1]{\left\lvert #1\right\rvert}

\newcommand{\inv}{^{-1}}

\newcommand{\eps}{\varepsilon}

\usepackage{algorithm}
\usepackage{algpseudocode}

\renewcommand{\subset}{\subseteq}

\newcommand{\Number}[1]{\mathbb{#1}}

\newcommand{\R}{\Number{R}}

\newcommand{\Z}{\Number{Z}}
\newcommand{\X}{\Number{X}}
\renewcommand{\S}{\Number{S}}

\renewcommand{\phi}{\varphi}
\renewcommand{\epsilon}{\varepsilon}

\newcommand{\MT}{\mathcal{MT}}
\newcommand{\LMT}{\mathcal{LMT}}

\renewcommand{\subset}{\subseteq}

\newtheorem{theorem}{Theorem}[section]%
\newtheorem{cor}[theorem]{Corollary}%

\newtheorem{lemma}[theorem]{Lemma}

\newtheorem{definition}[theorem]{Definition}%

\raggedbottom

\date{}

\title{A Distance for Geometric Graphs via the Labeled Merge Tree Interleaving Distance}

\author[3]{Erin Wolf Chambers}
\author[1, 2]{Elizabeth Munch}
\author[1]{Sarah Percival}
\author[1]{Xinyi Wang}

\affil[1]{Department of Computational Mathematics, Science, and Engineering, Michigan State University}
\affil[2]{Department of Mathematics, Michigan State University}
\affil[3]{Department of Computer Science and Engineering, University of Notre Dame}

\begin{document}
\maketitle

\begin{abstract}
    Geometric graphs appear in many real-world data sets, such as road networks, sensor networks, and molecules. 
We investigate the notion of distance between embedded graphs and present a metric to measure the distance between two geometric graphs via merge trees. 
In order to preserve as much useful information as possible from the original data, we introduce a way of rotating the sublevel set to obtain the merge trees via the idea of the directional transform. 
We represent the merge trees using a surjective multi-labeling scheme and then compute the distance between two representative matrices. 
We show some theoretically desirable qualities and present two methods of computation: approximation via sampling and exact distance using a kinetic data structure, both in polynomial time. 
We illustrate its utility by implementing it on two data sets.
\end{abstract}

\section{Introduction}

Geometric graphs are ubiquitous in representing real-world datasets, from road networks and consumer preferences to skeletons to our data set of study: outlines of leaves. 
A fundamental task in shape analysis is to determine whether or not two objects are the ``same,'' and if not, to provide a quantitative measure of how different they are so that we can cluster, compare, and simplify. 
Our work is motivated by the question of how we can define and efficiently compute a distance between any two graphs in a way that encodes information about the embedding itself; see \cite{Buchin2023} for a survey.
Note that this task fundamentally differs from more general graph distance measures where graphs would only be at distance 0 if they are isomorphic; in this setting, we need graphs to be both combinatorially isomorphic and have the same embedding under that isomorphism. 

To construct a distance, we turn to the idea of the directional transform \cite{PHT,SCHAPIRA199183,Curry2018HowMD, Ghrist2018,Munch2023}, which encodes a shape $\X \subset \R^d$ as a family of $\R$-valued functions $f_\omega: \X \to \R$ for each unit vector $\omega \in \S^{d-1}$.
Because many tools from Topological Data Analysis (TDA) are built to take as input such an $\R$-valued function, we can return a chosen topological signature for each unit vector. 
For example, the persistent homology transform (PHT) returns a persistence diagram for every $\omega$; and the Euler characteristic transform (ECT) returns an Euler curve for every $\omega$. 
The PHT and ECT are both injective on the space of finite simplicial complexes in $\R^2$ or $\R^3$ \cite{PHT}; i.e.~if $\text{PHT}(A) = \text{PHT}(B)$ (or $\text{ECT}(A) = \text{ECT}(B)$), then $A=B$.
This works more generally in the setting of $o$-minimal structures \cite{Ghrist2018,SCHAPIRA199183} although we still need the information for every possible direction to promise injectivity. 
However, in practice, we can only compute the ECT or PHT for a finite number of angles. 
While injectivity still holds for a finite number of angles \cite{Curry2018HowMD, BeltonReconstruction}, these theoretical results require that the number of angles is exponential in the dimension of $\X$, and the directions chosen are specific to the particular input.  

Despite these limitations in theory, various applications have shown the effectiveness of the PHT and ECT, including but not limited to predicting the advancement of diseases based on tumor shapes \cite{CrawfordMonod, ECTTumorShboul}, differentiating cultivars based on leaf shapes \cite{Zhang2021}, measuring the morphological diversity of barley seeds \cite{BarleySeed}, and detecting variations in protein structures \cite{Tang2022}.
In practice, it is common to sample a large number of directions, particularly in the case of machine learning applications, and then test sensitivity to the number of directions chosen. 
\begin{figure}
    \centering
    \includegraphics[width=\linewidth]{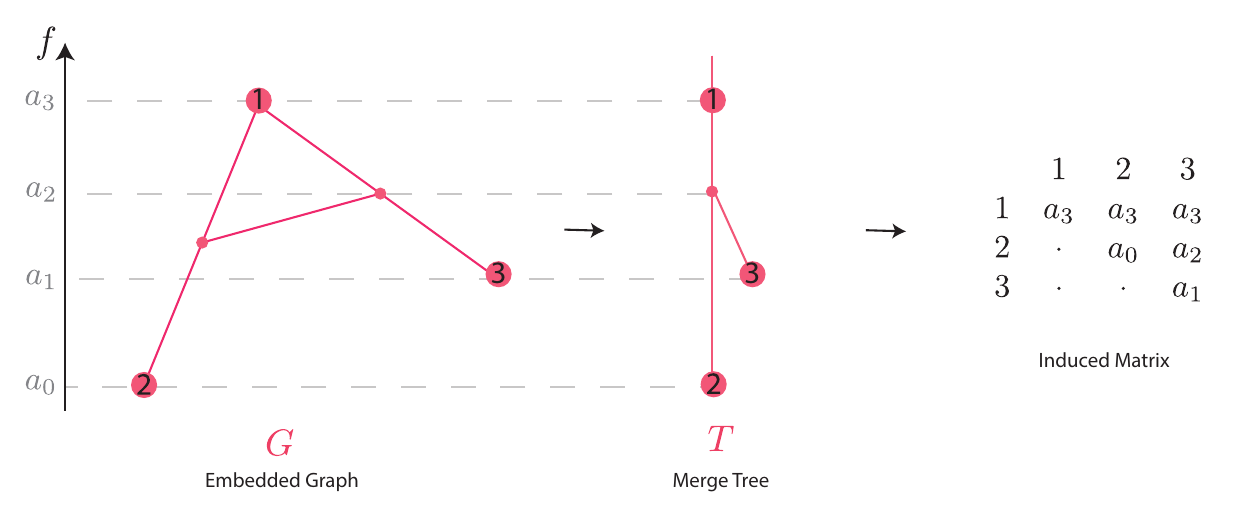}
    \caption{Left: A graph embedded in $\R^2$, along with a height function $f:V(G) \rightarrow \R$.  Middle: The merge tree of $(G,f)$, with three labels shown, each coming from a vertex of $G$. Right: The matrix $\mathcal{M}(T, f,\pi)$, as defined in Section~\ref{sec:dfn-labeledtree}.}
    \label{fig:GraphtoTree}
\end{figure}
In this paper, we study directional transforms specific to embedded graph data,  where rather than returning persistence diagrams or Euler characteristic curves, we consider a merge tree instead \cite{MergeTreeGH, MergetreeInterleaving, DmitriyMergeTree, PascucciMergeTree}. 
Given a space and an $\R$-valued function, merge trees are used to track the connectivity of sublevel sets in the space; see Fig.~\ref{fig:GraphtoTree} for an example.
Merge trees are commonly employed in TDA and visualization in a wide range of applications, such as  
analyzing combustion processes \cite{Bennett2, Mascarenhas2011}, function simplification \cite{Bauer2012}, and many more~\cite{Yan2021}.
Compared to many other topological signatures, merge trees are faster to compute, and track connectivity of sublevel sets, which persistent homology does not. 
However, they only track $0-$th dimensional structure and so higher dimensional information is lost.

\begin{figure}\centering
\includegraphics[width=0.8\linewidth]{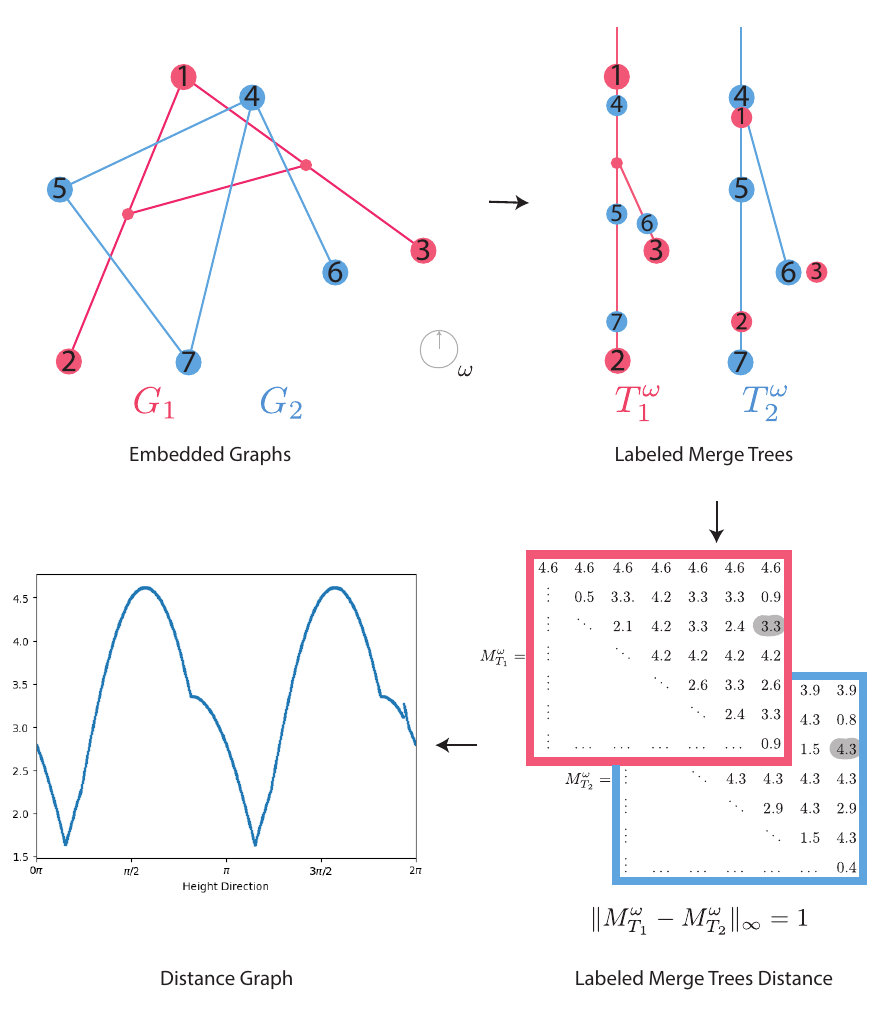}
\caption{Our method of comparing two embedded graphs, which takes every height direction $\omega \in S^1$, converts each graph to a labeled merge tree, then summarizes the trees in a matrix representation in order to calculate a distance.}
\label{fig:fullProcedure}
\end{figure}
Since our goal is a computable distance between merge trees and because of the intractability of many available merge tree distances (see \cite{Yan2020} for a survey), we focus on the case of labeled merge trees \cite{Munch2019,MergetreeInterleaving} where computation is possible. 
A labeling of a graph is a map from $\{1,\cdots,n\}$ to the vertex set. 
For labeled merge trees, we require that the labeling is surjective on the leaves.
These can be labels available from the data, such as on a phylogenetic tree; alternatively, these labels can be added artificially. 

In Section \ref{sec:background}, we introduce the necessary background of merge trees and the directional transform.
We discuss our method and study its metric properties in Section \ref{sec:LMTT}.
We construct an algorithm to determine the merge tree of an embedded graph for any direction $\omega$, which we then convert to a  matrix representation; see Figure~\ref{fig:GraphtoTree}. We show that, as $\omega$ varies from $0$ to $2\pi$, there are only a linear number of distinct merge trees relative to the number of edges.
Based on the embedding of two graphs to be compared, we create a labeling scheme which matches the vertices that give birth to leaves, from a given direction, with the closest point in the other graph. 
These labels propagate to the merge trees, allowing for fast computation of the labeled merge tree interleaving distance from each direction using the matrix representations; we then compare the two graphs by integrating this distance over all directions. 
See Fig.~\ref{fig:fullProcedure} for a visual outline of the process, which we call the Labeled Merge Tree Transform (LMTT) distance.
In Section \ref{sec:algo}, we provide both theoretical and practical methods for computing and estimating the distance, respectively.
We incorporate a kinetic data structure to compute the exact distance.
The practicality of our distance is demonstrated by applying it to a \emph{Passiflora} leaves data set \cite{Chitwood2017} and a letter data set in Section \ref{sec: Application}. Our implementation is available at \href{https://github.com/elenawang93/Labeled-Merge-Tree-Transform}{github.com/elenawang93/Labeled-Merge-Tree-Transform}.

\section{Background}
\label{sec:background}
In this section, we briefly review the necessary background for merge trees and directional transforms.  
We refer the reader to recent books and surveys for a full overview~\cite{Oudot2015,DeyWang2021,Yan2021,Munch2023}.

\subsection{Merge Trees}
\label{ssec:mergetree}

Intuitively, a \textit{merge tree} is a $1$-dimensional topological invariant that tracks when components appear and merge in sublevel sets of a scalar function. 
It is closely related to other topological descriptors such as persistent homology \cite{EdelsbrunnerPH}, as the branches of merge trees track the same $0$-dimensional features that appear in the persistence diagram, and Reeb graphs \cite{Reeb}, which keep track of the evolution of components in the level set.

Our input data will be a graph with an $\R$-valued function. 
First, we treat a graph $G = (V,E)$ as a 1-dimensional stratified space, so that the edges are homeomorphic to copies of the interval $[0,1]$. 
We will additionally be given a continuous map $f:G \to \R$ where we will assume that the function is monotone on the vertices. 
This additional assumption means we can combinatorially represent the function by specifying the function on the vertex set $f:V(G) \to \R$ and then extend linearly to the edges.
We also note that whenever needed, we can subdivide an edge $e = (u,w)$ with a vertex $v$ with $f(u) \leq f(v) \leq f(w)$ to get an equivalent representation of the structure. 

Our next step will be to understand the merge tree of a given graph with a function $f:G \to \R$. 
A merge tree is defined as follows. 
\begin{definition}
\label{defn:merge_tree}
A \emph{merge tree} is a pair $(T,f)$ with $f:T \to \R \cup \{ \infty\}$ where $T$ is a rooted, finite, acyclic, connected graph (i.e.~a finite tree in the graph-theoretic sense with a distinguished root vertex, $r$), together with a real-valued function $f:V \to \R\cup \{\infty\}$. 
This function has the property that every non-root vertex has exactly one neighbor with a higher function value,
and $f(v) = \infty$ iff $v$ is the root $r$.
\end{definition}

We write $V(T)$ for all non-root vertices of the tree $T$. 
We call the non-root vertices with degree $1$ \emph{leaves} and denote the set of leaves $L(T)$.
Every vertex has an up- and a down-degree, given by the number of neighbors with higher (respectively lower) function value. 
A vertex $w$ is regular if its up- and down-degree are both equal to 1. 
In this case, denoting the neighbors of $w$ as $u$ and $v$, we can remove the vertex $w$ and add in the edge $u,v$ to get an equivalent representation of the merge tree in a topological sense. 
So we say two merge trees are \emph{combinatorially equivalent} if $T_1$ and $T_2$ are isomorphic as combinatorial trees up to this subdivision equivalence, ignoring the function value information.
We denote the space of all merge trees by $\MT$. 

The merge tree of a given graph with a function $f:G \to \R$ is defined as follows, following \cite{DmitriyMergeTree}. 
For graph $(G,f)$, define the epigraph of the function as $\text{epi}(f) = \{(x,y) \in G \times \R \mid y \geq f(x)\}$.
Define a function $\bar{f}:\text{epi}(f) \to \R$ by $(x,y) \mapsto y$.
Say that two points $(x,y)\sim_{\bar{f}} (x',y')$ are equivalent iff $y = y'$ (i.e.~$\bar{f}(x,y) = \bar{f}(x',y')$) and they are in the same connected component of the levelset $\bar{f}^{-1} (a)$.
Then the merge tree of $(G,f)$, denoted $T(G,f)$,  is the quotient space $\text{epi}(f) / \sim_{\bar{f}}$.\footnote{The experts will notice that this is the Reeb graph of  $(\text{epi}(f), \bar{f})$.}
Note that the points of $T(G,f)$ inherit the function $f$ since we can set $f([x]) = f(x)$ which is well defined by construction of $\sim_f$.
See Figure~\ref{fig:GraphtoTree} for an example. 

Note that this definition is subtly different from the one used in the visualization literature (see e.g.~\cite{Yan2021}).  
Define an equivalence class on the points of $G$ by setting $x \sim y$ iff $f(x) = f(y)$,  and $x$ and $y$ are in the same connected component of $f^{-1}(-\infty,f(x))$.
Then the alternative merge tree is the quotient space $X/\sim$. 
While the result is nearly identical, the difference is the root vertex will have a function value equal to the global maximum of $f$; in particular in the case of compact $X$ this results in a finite function value for the root which goes against Definition \ref{defn:merge_tree}.
However, this causes issues in the case of some distance computations, specifically the interleaving distance \cite{DmitriyMergeTree,Munch2019,MergetreeInterleaving}, where an infinite tail (i.e.~$f(r) = \infty$) is needed for technical reasons. 
In this paper, we will use the infinite tail definition due to our focus on the use of distances.
However, we note that we could use the first definition with some modification as in \cite{DmitriyMergeTree}; or alternatively, we could attach a root vertex $r$ to the global maximum vertex in the alternative merge tree and set $f(r) = \infty$ to resolve the issue.

We will next introduce labeled merge trees. 
\begin{definition}
Fix an integer $n \in \Z_{\geq 0}$  let $[n] = [1,\ldots,n]$. 
An \emph{$n$-labeled merge tree} is a merge tree $(T,f)$ together with a map to the non-root vertices $\pi: [n]\rightarrow V(T)$ that is surjective on the set of leaves. 
We denote a labeled merge tree with the triple $(T, f, \pi)$, and the space of all $n$-labeled merge trees by $\LMT_n$.
\end{definition}
Note that this requirement has a label on all leaves, but might involve having labels on interior vertices, including those resulting from the subdivision of an edge.
See Fig.~\ref{fig:GraphtoTree} for an example.

Finally, assume we are given a labeling (no longer requiring surjectivity in any form) on an input graph $\pi:[n] \to V(G)$, where we recall that we can subdivide an edge as needed to have this labeling be defined on the vertex set. 
Then we can use this to induce a labeling $\bar \pi:[n] \to V(T)$ on the merge tree $T = T(G,f)$ by setting $\bar \pi(i)$ to be the equivalence class of $\pi(i)$ under the quotient space construction. 
Note that this induced labeling does not have the required surjectivity on the leaves without additional assumptions on $\pi$; later in Cor.~\ref{cor:surjectiveLabel} we will provide assumptions to make this happen.

\subsection{Labeled Merge Tree Distance}
\label{ssec:labeledMergeTreeDistance}
\label{sec:dfn-labeledtree}

Finding metrics to compare different merge trees has recently been a focus of study, given their wide use as a simple topological summary of scalar field data.
In this paper, we focus on the interleaving distance, which has been broadly used to compare persistence diagrams, with a range of theoretical and practical guarantees in various settings.
Computing the interleaving distance on merge trees is NP-hard \cite{MergeTreeGH, InterleavingComplexity}, although there has been work done on heuristic methods for certain types of data~\cite{Yan2020}.
The reason we are interested in labeled merge tree data here is that we can utilize the $L_\infty$-cophenetic metric~\cite{Munch2019,MergetreeInterleaving}, also known as the labeled interleaving distance, since computation becomes tractable in this setting.

In order to state the interleaving distance,  we define a natural conversion of a merge tree into a matrix as follows. 
Let $\text{LCA}(v, w)\in T$ denote the \textit{lowest common ancestor} of vertices $v$ and $w\in V(T)$, that is the vertex with the lowest function value $f(v)$ that has both $v$ and $w$ as descendants, where we define each node to be a descendant of itself. 
The \emph{induced matrix of the labeled merge tree} $(T, f, \pi)$
is $\mathcal{M}(T, f,\pi)\in\R^{n\times n}$, where
$\mathcal{M}(T, f,\pi)_{ij}=f(\text{LCA}(\pi(i), \pi(j)))$.
See Fig.~\ref{fig:GraphtoTree} for an example.
We can now compute the distance between two labeled merge trees with the same set of $n$ labels by taking the $L_\infty$ norm of the difference between the two induced matrices; see Fig.~\ref{fig:enter-label}. 
More formally, given a matrix $A$, recall that the $L_\infty$ norm of $A$ is $\|A\|_\infty = \max\abs{a_{ij}}$. 
Then the distance is defined as follows.
\begin{definition}
\label{defn:labeledInterleavingDist}
For a fixed $n>0$ and two labeled merge trees in $\LMT_n$, the \emph{labeled interleaving distance} is
\[
d((T_1, f_1,\pi_1),(T_2, f_2,\pi_2)) :=  \| \mathcal{M}(T_1, f_1,\pi_1)-\mathcal{M}(T_2, f_2,\pi_2)\|_\infty.
\]
\end{definition}

\begin{figure}
    \centering
    \includegraphics[width = \linewidth]{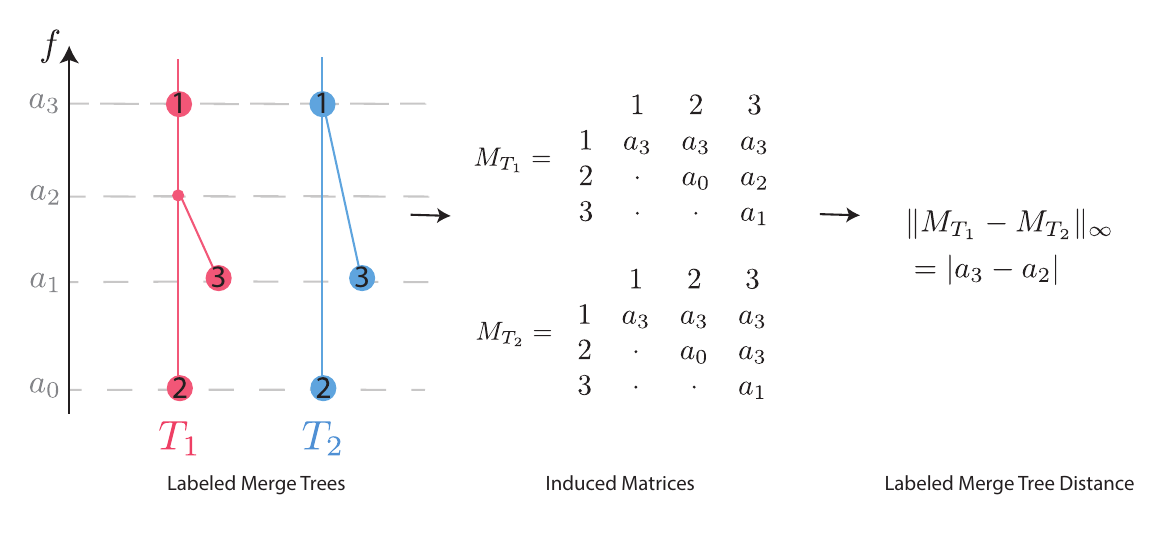}
    \caption{Two merge trees with a common set of labels, along with their induced matrices and the labeled merge tree distance between them.}
    \label{fig:enter-label}
\end{figure}

It is worth noting that this yields a metric on the space of merge trees with a common label set \cite{Munch2019,MergetreeInterleaving}, and further is an interleaving distance \cite{deSilva2018} for a particular choice of category. 

\subsection{Directional Transform on Embedded Graphs}  
\label{sec:DT}

Directional transforms are a relatively recent development arising from computing topological features, such as persistent homology or the Euler characteristic, of an object embedded in Euclidean space using varying directions to construct a family of input height functions.
This transform first appeared in the TDA literature in \cite{PHT}, where Turner \textit{et al.} showed that the  Euler characteristic transform (ECT) and persistent homology transform (PHT) computed with all possible directions $\S^{d-1}$ are both injective on the space of simplicial complexes in $\R^d$.
Later work \cite{Curry2018HowMD} proved that we need a large but finite set of directions for the PHT and ECT to preserve the topology fully.  
More recent work gave a quadratic time algorithm to explicitly compute all necessary directions for persistent homology to reconstruct a graph embedded in $\R^2$~\cite{BELTON2020101658}.
Additionally, using the machinery of Euler calculus \cite{Schapira1995}, the injectivity result was extended concurrently by \cite{Ghrist2018} and \cite{Curry2018HowMD} to nice enough subsets in any ambient dimension.
We take inspiration from many of these ideas but use the labeled interleaving distance on merge trees, given previous work that successfully used a merge tree transform to encode data, but used a different distance for the merge trees~\cite{Batakci2023}.

Our input data will be embedded graphs. 
As in Sec.~\ref{ssec:labeledMergeTreeDistance}, we treat a graph $G = (V,E)$ as a 1-dimensional stratified space, so that the edges are homeomorphic to copies of the interval $[0,1]$. 
We say a graph $G = (V,E)$ is \textit{embedded} in $\R^d$ if it has a continuous function $\phi : G\rightarrow \R^d$ for $d\geq 2$, that is homeomorphic onto its image. 
Note that this means each vertex in $V$ is mapped to a point, and each edge in $E$ is mapped to a curve in $\R^d$ so that the graph structure is maintained. 
For ease of notation, we often denote this information as a pair $(G,\phi)$. 
In this paper, we restrict ourselves to a graph embedded in $\R^2$ where each edge is a straight line between its endpoints, which we refer to as a \textit{geometric graph}.\footnote{Note that the term geometric graph can have different definitions in the literature; our definition matches the usage in~\cite{Goodman2018} with the additional restriction that the edges cannot cross.}

Fix a unit vector direction $\omega \in \S^1$. 
We can define the height function for direction $\omega$, $f_\omega: G \to \R$, on the embedded graph $(G,\phi)$ by setting $f_\omega(x) = \langle \phi(x),\omega \rangle$.
For a fixed $a\in\R$, the sublevel set at $a$ is just the subgraph $G_a$ induced by the vertices $\{v \mid f_\omega(v) \leq a\}$.
Then the merge tree for direction $\omega$, denoted $T^\omega(G)$, is the merge tree of $(G,f_\omega)$. 
The merge tree transform is the function 
\begin{equation*}
\begin{matrix}
    MTT(G): & \S^1 & \longrightarrow & \MT\\
    & \omega & \longmapsto & T^\omega(G)
\end{matrix}
\end{equation*}

\begin{definition}
A vertex $v \in V(G)$ is an \textit{extremal vertex} if its embedding is \emph{not} in the convex hull of its neighbors $nbr(v) = \{w \mid vw \in E(G)\}$. 
The collection of extremal vertices $V' \subseteq V$ is called the \textit{extremal vertex set}.
\end{definition}
See Figure \ref{fig:Extremal}.
These vertices completely characterize the collection of vertices that result in a leaf in the merge tree for some direction, as shown in the following lemma. 
\begin{lemma}
\label{lem:extremalVertex}
 Fix a straight-line embedded graph $(G,\phi)$. 
 A vertex $v\in V$ gives birth to a leaf in a merge tree $T^\omega(G)$ for some $\omega$ if and only if $v\in V'$ (i.e.~$v$ is an extremal vertex).
\end{lemma}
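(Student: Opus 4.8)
The plan is to recast the combinatorial statement about leaves of $T^\omega(G)$ as a purely local, convex-geometric condition at the vertex $v$, and then resolve that condition with a separating-hyperplane argument. First I would observe that since $f_\omega$ is linear on each straight-line edge, all of its local minima on $G$ occur at vertices, and a vertex $v$ gives birth to a leaf of $T^\omega(G)$ exactly when a new connected component of the sublevel set is created at height $f_\omega(v)$ under the epigraph quotient of Definition~\ref{defn:merge_tree}; equivalently, when $v$ is a \emph{strict} local minimum of $f_\omega$, i.e.\ when $f_\omega$ strictly increases along every edge leaving $v$. Writing the neighbors of $v$ as $w_1,\dots,w_k$ and setting $u_i = \phi(w_i) - \phi(v)$, and recalling $f_\omega(x) = \langle \phi(x),\omega\rangle$, this condition becomes: there is a unit vector $\omega$ with $\langle u_i,\omega\rangle > 0$ for every $i$.

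The crux of the argument is to characterize when such an $\omega$ exists, and here I would invoke the standard strict-separation / Gordan alternative: a direction $\omega$ with $\langle u_i,\omega\rangle > 0$ for all $i$ exists if and only if $0 \notin \operatorname{conv}\{u_1,\dots,u_k\}$. For one direction, if $0 = \sum_i \lambda_i u_i$ is a convex combination then $\sum_i \lambda_i \langle u_i,\omega\rangle = 0$ for every $\omega$, so the $\langle u_i,\omega\rangle$ cannot all be positive; for the converse, since $\operatorname{conv}\{u_i\}$ is compact and convex, a point not in it can be strictly separated from $0$ by a hyperplane whose normal, suitably oriented and normalized, is the desired $\omega$. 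Translating everything by $\phi(v)$ turns $0 \in \operatorname{conv}\{u_i\}$ into $\phi(v) \in \operatorname{conv}\{\phi(w_1),\dots,\phi(w_k)\} = \operatorname{conv}(nbr(v))$. Thus an admissible direction exists precisely when $\phi(v) \notin \operatorname{conv}(nbr(v))$, which is exactly the definition of $v$ being extremal.

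Finally I would close both implications carefully. For the ``if'' direction, the admissible set is the intersection of the open half-planes $\{\omega : \langle u_i,\omega\rangle > 0\}$ with $\S^1$, hence an open arc; when $v$ is extremal this arc is nonempty, so any $\omega$ in it makes every incident edge strictly ascend and $v$ genuinely births a leaf. For the ``only if'' direction, when $v$ is not extremal the relation $\sum_i \lambda_i \langle u_i,\omega\rangle = 0$ holds for all $\omega$, so for each $\omega$ some incident edge fails to strictly ascend, and $v$ is never a strict local minimum. The one subtle point to watch is the degenerate case where $\langle u_i,\omega\rangle = 0$ for some $i$ (an edge lying in a level direction of $\omega$): this is precisely why one must work with the open arc and strict inequalities rather than a closed cone, and it is handled by the openness of the admissible set together with the observation that a level edge still prevents $v$ from being a strict local minimum. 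I expect this separation step, and the accompanying strict-versus-nonstrict bookkeeping, to be the only real obstacle; the rest is routine translation between the merge-tree and convex-geometric descriptions.
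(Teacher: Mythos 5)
Your proposal is correct and follows essentially the same route as the paper: both reduce ``$v$ births a leaf in direction $\omega$'' to the local condition that every neighbor has strictly larger $f_\omega$-value, and both resolve it with hyperplane separation between $\phi(v)$ and $\operatorname{conv}(nbr(v))$. The only cosmetic difference is that you package the ``never a leaf'' direction via the Gordan/convex-combination identity $\sum_i \lambda_i \langle u_i,\omega\rangle = 0$, whereas the paper argues geometrically in the contrapositive, exhibiting the level line at height $\tfrac{1}{2}(f_\omega(u')-f_\omega(v))$ as an explicit separator; your explicit handling of the degenerate level-edge case is a point the paper leaves implicit.
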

\begin{proof}
For a straight-line embedded graph, the leaves of the merge tree can be characterized by the following property.
A vertex $v \in V(G)$ gives birth to a leaf in a merge tree for direction $\omega$ iff there is no neighbor $u$ of $v$ with function value $f_\omega(u) \leq f_\omega(v)$. 

\begin{figure}
    \centering
    \includegraphics[width = 0.9\textwidth]{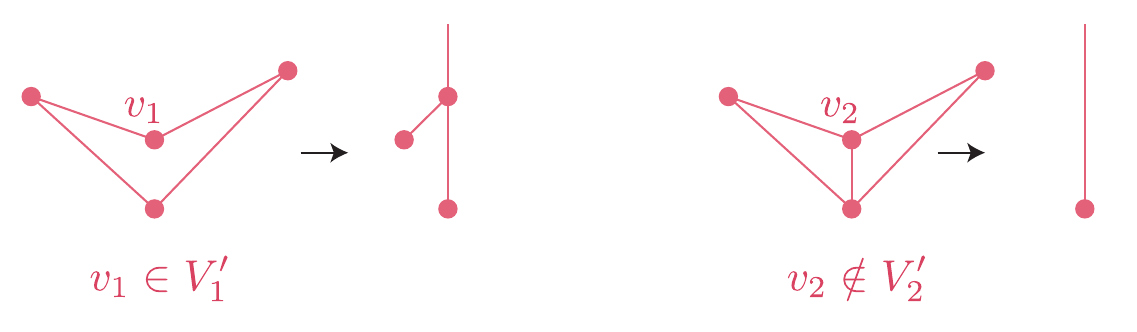}
    \caption{Examples of extremal and non-extremal vertices}
    \label{fig:Extremal}
\end{figure}

For the forward direction, assume $v$ gives birth to a leaf in a merge tree for some direction $\omega\in \S^1$. 
This means that for all $u \in nbr(v)$, $f_\omega(u) > f_\omega(v)$. 
Let $u'$ be the neighbor with the lowest function value among them. 
Then there is a line normal to $\omega$, which is the levelset at 
$\{ x \in \R^2 \mid f_\omega(x) = \tfrac{1}{2} (f_\omega(u')-f_\omega(v))\}$.
This line separates the point $v$ from $nbr(v)$, so it must also separate the convex hulls. 
Thus $v$ is not in the convex hull of its neighbors, so $v \in V'$.

For the other direction, assume we have an extremal vertex $v \in V'$. 
By the hyperplane separation theorem and because $v$ is its own convex hull, there exists a line separating $v$ and the convex hull of $nbr(v)$. 
Choose $\omega$ to be the normal to this line such that $f_\omega(v)$ is less than all its neighbors (else take the opposite normal). 
Then for the function $f_\omega$, all neighbors have function values greater than $v$, and thus $v$ gives birth to a leaf in direction $\omega$ as required.
\end{proof}

Note that this result immediately provides the assumptions discussed earlier to induce a labeling on the merge tree from a labeling on the graph. 

\begin{cor}
\label{cor:surjectiveLabel}
    Given a map $\pi:[n] \to V(G)$ which is surjective on the extremal vertex set, the induced labeling $\bar \pi:[n] \to V(T^\omega(G))$ is surjective on the leaves for any $\omega \in \S^1$.
\end{cor}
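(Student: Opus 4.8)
The plan is to reduce the statement to the leaf-by-leaf level and feed it through Lemma~\ref{lem:extremalVertex}: fix a direction $\omega \in \S^1$ and an arbitrary leaf $\ell \in L(T^\omega(G))$, and produce a label $i \in [n]$ with $\bar\pi(i) = \ell$. Since the leaf and the direction are arbitrary, this yields surjectivity of $\bar\pi$ on the leaves for every $\omega$. The whole argument runs on the birth characterization recorded in the proof of Lemma~\ref{lem:extremalVertex} (a vertex gives birth to a leaf in direction $\omega$ iff none of its neighbors has a smaller or equal function value) together with the surjectivity hypothesis on $V'$.

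First I would identify the vertex of $G$ responsible for $\ell$. Because the embedding is straight-line, $f_\omega$ is monotone along each edge and so has no interior local minimum on an edge; hence every leaf of $T^\omega(G)$ is born at a vertex. Let $v \in V(G)$ be that birth vertex, so that under the quotient $\text{epi}(f_\omega)/\!\sim_{\bar f_\omega}$ the leaf $\ell$ is exactly the class of the bottom of the fiber over $v$, i.e.\ $\ell = [\,(v,f_\omega(v))\,]$. By construction $v$ gives birth to a leaf in direction $\omega$.

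Next I would promote this to membership in the extremal vertex set and then invoke the labeling hypothesis. Since $v$ gives birth to a leaf for this particular $\omega$ (hence for \emph{some} direction), the forward implication of Lemma~\ref{lem:extremalVertex} gives $v \in V'$. As $\pi$ is surjective on $V'$, there exists $i \in [n]$ with $\pi(i) = v$, and therefore $\bar\pi(i) = [\pi(i)] = [\,(v,f_\omega(v))\,] = \ell$. This exhibits the required preimage, completing the argument once $\ell$ and $\omega$ are allowed to vary.

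The main obstacle is the second step, namely the bookkeeping that identifies $\bar\pi(i)=[\pi(i)]$ with the leaf $\ell$ under the merge-tree quotient. This requires unwinding the definition of the induced labeling by tracking $\pi(i) \mapsto (\pi(i),f_\omega(\pi(i))) \mapsto [(\pi(i),f_\omega(\pi(i)))]$, and ruling out a leaf forming in the interior of an edge, which cannot occur since $f_\omega$ is monotone along each straight segment (so interior points and subdivision vertices, whose neighbors are collinear, are never local minima). Once these identifications are in place, the conclusion is immediate from Lemma~\ref{lem:extremalVertex} and the surjectivity of $\pi$ on $V'$.
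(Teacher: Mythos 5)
Your proposal is correct and follows exactly the route the paper intends: the paper states Cor.~\ref{cor:surjectiveLabel} as an immediate consequence of Lemma~\ref{lem:extremalVertex} (``this result immediately provides the assumptions\ldots''), and you simply unwind that implication---every leaf of $T^\omega(G)$ is born at a vertex of $G$ (no edge-interior minima for straight-line embeddings), the forward direction of the lemma makes that vertex extremal, and surjectivity of $\pi$ on $V'$ then supplies a label whose equivalence class under the quotient is the leaf. Your write-up is in fact slightly more careful than the paper's, which omits the bookkeeping identifying $\bar\pi(i)$ with the leaf class; the only caveat, shared equally by the paper's own lemma (whose proof uses strict inequalities $f_\omega(u) > f_\omega(v)$), is the degenerate case of $\omega$ perpendicular to an edge, where the minimum set of a leaf is a flat segment rather than a single birth vertex.
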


\section{Distance between Embedded Graphs}
\label{sec:LMTT}

In this section, we give the definition of a distance between embedded graphs via the labeled interleaving distance. 
At a high level, given two embedded graphs $G_1$, $G_2$, we will compute their merge trees $(T_1, f_1)$ and $(T_2, f_2)$ for height functions $f_\omega$ over all $\omega\in\S^1$. 
We then introduce a method to label the merge trees with a common label set using the geometric embedding in order to compute the labeled interleaving distance (Defn.~\ref{defn:labeledInterleavingDist}).  
We then define a distance between graphs by integrating over all $\omega\in \S^1$.
We conclude the section with a  discussion of properties of the distance.

\subsection{Surjective Labeling Scheme} \label{ssec:labelingscheme}
We start by constructing a labeling on the embedded graphs with the same label set $[n+m]$, which then induces a labelling on the merge trees for any fixed $\omega$.

Given embedded graphs $\phi_1:G_1\to \R^2$ and $\phi_2:G_2\to \R^2$, let $V_1$ and $V_2$ denote the vertex sets respectively. 
Recall that a vertex is extremal if it is not in the convex hull of its neighbors in the graph, and define the subsets $V_1' \subseteq V_1$ and $V_2'\subseteq V_2$ to be the respective extremal vertex sets. 
Enumerate these vertices by setting $V_1' = \{v_1,\cdots,v_n \}$ and $V_2' = \{w_{n+1}, \cdots, w_{n+m} \}$. 
Next, for each $v_i \in V_1'$, let $p_i$ be a point (choosing arbitrarily if non-unique) which is closest in the embedding of $G_2$ via the Euclidean distance; so 
\begin{equation*}
    \|f_1(v_i) - f_2(p_i)\| = \min_{x \in G}\|f_1(v_i) - f_2(x)\|.
\end{equation*}
Note that $p_i$ can be a vertex of graph $G_2$ or an interior point on an edge, in which case we will subdivide the edge and add $p_i$ to the vertex set $V_2$.
Similarly, find a point $q_j \in G_1$ which is closest to $w_j \in V_2'$, adding a $q_j$ to $V_1$ if necessary. 
This gives a labeling from the set $[n+m]$ to each geometric graph, which we denote as $\pi_1$ and $\pi_2$, given by 
\begin{equation*}
\begin{matrix}
\pi_1(i) =
\begin{cases}
     v_i & i \in [1,n]\\
     q_i & i \in [n+1,n+m]
\end{cases}
&\qquad &
\pi_2(i) =
\begin{cases}
     p_i & i \in [1,n]\\
     w_i & i \in [n+1,n+m].
\end{cases}
\end{matrix}
\end{equation*}

This labeling can be pushed forward to the merge tree for any direction, so we abuse notation and write $\pi_1:[n+m] \to V(T_1)$ and $\pi_2:[n+m] \to V(T_2)$ for the resulting labelings.
Note that by Cor.~\ref{cor:surjectiveLabel}, because the original labeling was defined for all extremal vertices, the resulting labeling is surjective on the leaves as is required by the interleaving distance definition.
The induced matrices of $T_1^\omega$ and $T_2^\omega$ therefore have size $(n+m)\times(n+m)$. 
See Fig.~\ref{TreeLabels} for an example of the resulting labeled merge trees and their matrices. 

\begin{figure}
\centering
\includegraphics[width = \textwidth]{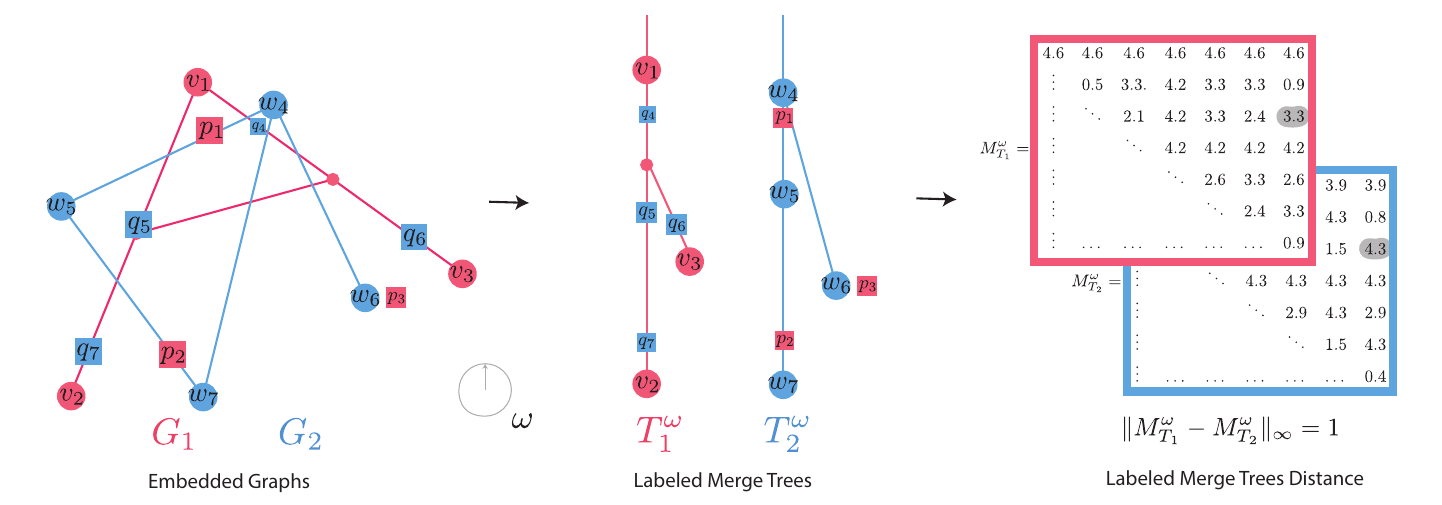}
\caption{Given $G_1$, $G_2$, and a direction $\omega$ as shown, we perform the surjective labeling scheme on $v\in V_1'\cup V_2'$.  Here, any vertex with a number inside is extremal, and so, generates a label in the merge trees.}
\label{TreeLabels}
\end{figure}

\subsection{The Labeled Merge Tree Transform Distance}

For a given pair of graphs, we fix the labels from the previous section, and use this to induce a labelling on the merge tree for each direction $\omega \in \S^1$ resulting in a pair of labeled merge trees $T_1^\omega$ and $T_2^\omega$.
Denote the induced matrices of $T_1^\omega$ and $T_2^\omega$ as $M_{T_1}^\omega$ and $M_{T_2}^\omega$ respectively. 
As seen in Defn.~\ref{defn:labeledInterleavingDist}, we have a distance between the labeled trees for every direction, 
$d(T_1^\omega,T_2^\omega)= \| M_{T_1}^\omega-M_{T_2}^\omega \|_\infty$.
To get a distance between the two collections of trees, we integrate over this distance as follows.

\begin{definition}
    \label{def:DMT}
    The \emph{labeled merge tree transform (LMTT) distance} between $G_1$ and $G_2$ is defined as
    \begin{equation*}
    D(G_1, G_2) = 
     \frac{1}{2\pi}\int_{0}^{2\pi} d(T_1^\omega, T_2^\omega) \, d\omega \ = 
     \frac{1}{2\pi}\int_{0}^{2\pi} \| M_{T_1}^\omega-M_{T_2}^\omega \|_\infty \, d\omega \
    \end{equation*}
    where the labeling of $G_1$ and $G_2$ is done as in Sec.~\ref{ssec:labelingscheme}.
\end{definition}

We use the term distance for this comparison measure, as we have not yet shown that this definition satisfies all the axioms of a metric.  Recall the following properties:

\begin{definition}
    Let $S$ be a set with a given function 
    $\delta:S\times S\rightarrow \R_{\geq 0}\cup\{\infty\}$, where $\R_{\geq 0}$ is the non-negative extended real line. 
    We define the following properties:
    \begin{enumerate}
        \item Finiteness: for all $x, y\in S$, $\delta(x, y)<\infty$.
        \item Identity: for all $x\in\mathbb{R}$, $\delta(x, x)=0$.
        \item Symmetry: for all $x, y\in S$, $\delta(x, y) = \delta(y, x)$.
        \item Separability: for all $x, y\in S$, $\delta(x, y) =0$ implies $x=y$.
        \item Subadditivity (Triangle Inequality): for all $x, y, z\in S$, $\delta(x, y)\leq \delta(x, z) + \delta(z, y)$.
    \end{enumerate}
    If $\delta$ satisfies finiteness, identity, and symmetry, it is called a \emph{dissimilarity function} \cite{Jardine1977}. 
If $\delta$ satisfies all five properties, then it is a \emph{metric}; it is a \emph{semi-metric} if it satisfies all but subadditivity. 
\end{definition}

\begin{theorem}
    The labeled merge tree transform $D$ is a dissimilarity function. 
    That is, it satisfies finiteness, identity, symmetry,

\end{theorem}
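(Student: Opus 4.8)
The plan is to verify each of the three claimed properties directly from Definition~\ref{def:DMT}, leaning on the fact that the integrand $d(T_1^\omega, T_2^\omega) = \|M_{T_1}^\omega - M_{T_2}^\omega\|_\infty$ is, for each fixed $\omega$, the labeled interleaving distance, which is already known to be a metric on $\LMT_{n+m}$ \cite{Munch2019,MergetreeInterleaving}. The strategy throughout is to push the corresponding pointwise property through the averaging integral $\tfrac{1}{2\pi}\int_0^{2\pi}(\cdot)\,d\omega$, since nonnegativity, vanishing, and symmetry are all preserved under integration against a positive measure.

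For \textbf{finiteness}, I would first argue that the integrand is bounded on $\S^1$. Each entry of $M_{T_i}^\omega$ is $f_\omega$ evaluated at some lowest common ancestor, hence a height $\langle \phi_i(x), \omega\rangle$ for some point $x$ in the (compact) image of $G_i$; by Cauchy--Schwarz this is bounded uniformly in $\omega$ by the maximal Euclidean norm of a vertex embedding. The infinite entries coming from the root tails ($f(r) = \infty$) need brief care: I would note that the diagonal blocks and any LCA equal to the root contribute $\infty$ to \emph{both} matrices in the same positions, so those entries cancel in the difference and do not appear in $\|M_{T_1}^\omega - M_{T_2}^\omega\|_\infty$. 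Thus the integrand is a bounded function of $\omega$, and its integral over a finite interval is finite. For \textbf{identity}, when $G_1 = G_2$ with the same labeling we get $M_{T_1}^\omega = M_{T_2}^\omega$ for every $\omega$, so the integrand is identically zero and hence $D(G,G) = 0$. For \textbf{symmetry}, since $\|A\|_\infty = \|-A\|_\infty = \|B - A\|_\infty$ when $B-A = -(A-B)$, the integrand is symmetric in the two trees for each $\omega$; I would, however, flag the one subtle point that the labeling scheme of Sec.~\ref{ssec:labelingscheme} must produce the \emph{same} common label set and matching matrix coordinates whether we start from $(G_1,G_2)$ or $(G_2,G_1)$, so that the two integrands agree pointwise, after which symmetry of the integral is immediate.

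The main obstacle I anticipate is the measurability and integrability bookkeeping rather than any inequality: one must confirm that $\omega \mapsto d(T_1^\omega, T_2^\omega)$ is actually integrable (e.g.\ measurable and bounded) so that the integral is well-defined. Here I would invoke the structural fact stated in the introduction that as $\omega$ sweeps $[0,2\pi)$ there are only finitely many distinct combinatorial merge trees (linear in the number of edges), so $\S^1$ partitions into finitely many arcs on each of which the combinatorial type is fixed and each matrix entry varies as $\langle \phi_i(x),\omega\rangle$ --- a continuous, indeed piecewise-smooth, function of $\omega$. Consequently $\|M_{T_1}^\omega - M_{T_2}^\omega\|_\infty$ is a maximum of finitely many continuous functions on each arc, hence piecewise continuous and certainly integrable. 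With measurability and the uniform bound in hand, finiteness, identity, and symmetry all follow by transporting the corresponding pointwise statements through the positive integral; none of the three requires the full metric structure, so I would deliberately defer separability and the triangle inequality, consistent with the theorem statement claiming only that $D$ is a dissimilarity function.
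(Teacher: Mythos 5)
Your overall strategy coincides with the paper's: finiteness from a uniform bound on heights (the paper uses a bounding box giving $|f_\omega(x)-f_\omega(y)|\leq B$ for all $\omega$, hence a bound on every entry of the difference matrix and on the integrand), identity from the observation that the closest point of $G$ to $x\in G$ is $x$ itself, and symmetry from the symmetry of the labeling scheme together with $\|A-B\|_\infty=\|B-A\|_\infty$. Your measurability paragraph is extra care the paper does not include in this proof (it establishes piecewise continuity of $\omega\mapsto\|M_1^\omega-M_2^\omega\|_\infty$ separately, via Lemma~\ref{lem:CritAng}), and it is a legitimate strengthening.

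However, your treatment of the infinite entries contains a genuine error. First, diagonal entries are never infinite: since each vertex is its own descendant, $\mathrm{LCA}(\pi(i),\pi(i))=\pi(i)$, so $\mathcal{M}_{ii}=f_\omega(\pi(i))$, a finite height. Second, the claim that root-valued LCAs occur ``in the same positions'' of both matrices and therefore cancel is unjustified, and false in the only setting where it would matter: if disconnected inputs were allowed, two labels could lie in one connected component of $G_1$ but in two different components of $G_2$, producing a finite entry in $M_1^\omega$ against an infinite entry in $M_2^\omega$; nothing cancels, and $D$ would genuinely be infinite. (Moreover $\infty-\infty$ is not an arithmetic cancellation one can invoke.) The resolution is that Definition~\ref{defn:merge_tree} forces a single root with $f(r)=\infty$ if and only if $r$ is the root, which implicitly restricts the construction to connected graphs; for a connected graph the LCA of any two labeled vertices is a finite merge vertex, never the root, so no entry of either matrix is infinite and your digression is unnecessary rather than repairable as stated. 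One smaller elision: in the identity step, the phrase ``with the same labeling'' is precisely what must be verified, since the labeling is produced by the scheme of Section~\ref{ssec:labelingscheme} and not chosen freely; it holds because the nearest point of $G$ to each of its own extremal vertices is that vertex, so the scheme places labels $i$ and $i+n$ at identical points of both copies, making $\pi_1=\pi_2$ and hence the trees and matrices literally equal, which is exactly the tautological observation the paper makes.
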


\begin{proof}
    For the proof, we assume we have inputs $(G_1,f_1)$ and $(G_2,f_2)$ with labels $\pi_1:[n+m] \to V(G_1)$ and $\pi_2:[n+m] \to V(G_2)$ throughout. 
    
    \textit{Finiteness:}
    Because we assume that the input graphs are finite, the image $f_2(G_1)\cup f_2(G_2)$ is contained in a finite bounding box. 
    This in turn means that there is a global bound $B$ such that $|f_\omega(x)-f_\omega(y)| \leq B$ for any $x,y \in f_2(G_1)\cup f_2(G_2)$ and $\omega \in \S^1$.
    Then the difference between any entries in $M_{T_1}^\omega$ and $M_{T_2}^\omega$ is also bounded by $B$, so $d(T_1^\omega,T_2^\omega) \leq B$ as well.
    This implies that the integral of the function is also bounded, so $D$ is finite.

    \textit{Identity}: 
    When computing $D(G, G)$, we make the tautological observation that the closest point of $G$ to $x \in G$ is, of course, $x$. 
    Thus, the extremal vertices will each have a pair of labels: $i$ and $i+n$ where $n = |V'(G)|$ and this labeling will be the same on both graphs. 
    Further, this means the resulting matrices and merge trees will be the same for each, so the distance for any direction is 0, and thus the integral is also 0. 

    \textit{Symmetry:} This is immediate from both the symmetry of the labeling and the symmetry of the $L_\infty$ norm of matrices. 
\end{proof}

\begin{figure}
\centering
\includegraphics[width = \textwidth]{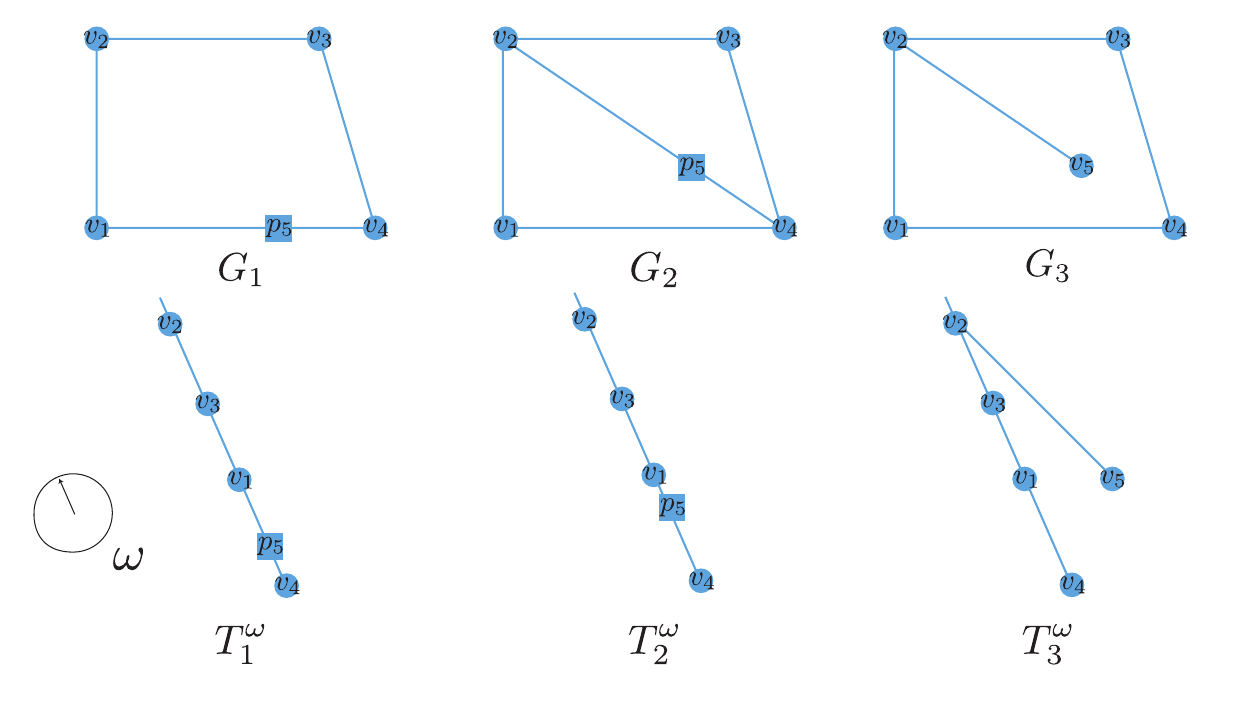}
\caption{
A counterexample for both separability and triangle inequality.  
}
\label{fig:BrokenGraphs}
\end{figure}

The LMTT is not a metric since it does not satisfy the separability or subadditivity properties. 
Both can be seen in the counterexample of Figure~\ref{fig:BrokenGraphs}.  
Here, $G_1$ is a simple trapezoid shape with vertices $v_1, \ldots, v_4$.  
These vertices are embedded identically for $G_2$ and $G_3$, but with the addition of one more edge from $v_2$ to $v_4$ in $G_2$, and an extra edge and vertex in $G_3$ that is a subset of the diagonal edge in $G_2$.
The merge tree of both $G_1$ and $G_2$ from any direction is a single edge, so the LMTT cannot hope to detect the difference between them, and separability fails.  
For triangle inequality, note that $d(G_1, G_2) = 0$ but that $d(G_1,G_3) > d(G_2,G_3)$ in any direction $\omega$, since $v_5$ can project with distance 0 to $p_5$ on $G_2$, but must project on $G_1$ to a point at positive distance.  Hence, $d(G_1,G_3) > d(G_2,G_3) + d(G_1,G_2)$. 

\subsection{Continuity}

In this section, we study continuity of the distance with respect to the varying parameter $\omega$. 
In the example of Figure \ref{fig:CriticalAngle}, we see that varying the direction leads to different merge tree structures for the same input graph.
When computing the distance between the embedded graphs, this means that we need to separate regions of $\S^1$ for which the underlying graph of the merge tree is the same. 
To find these regions, we first define a collection of \emph{critical angles}.

\begin{definition}
Given a graph $G=(V, E)$, for all $e\in E$, we define \emph{critical angles} to be the set of all normal vectors for each $e_i$. 
Specifically, if we denote $(v_i, v_j)=e$, then
\begin{equation}
    \label{eqn:crit}
\text{Crit}(G) := \left\{ \theta \in \S^1 \ \big| \ \vec{\theta} \cdot (\vec{v_i} - \vec{v_j}) = 0 \text{ for some } (v_i, v_j)=e \in E \right\}.
\end{equation}

\end{definition}

Since the number of edges of a finite graph is finite, we have a finite number of critical angles. 
Recall that two merge trees are combinatorially the same if the underlying trees are the same up to merging/subdivision of monotone edges and ignoring anything about the function values.
A crucial property of critical angles is that they define where the combinatorial merge tree could change given a continuous direction input from $\S^1$. 

\begin{figure}
\centering
\includegraphics[width = 0.6\linewidth]{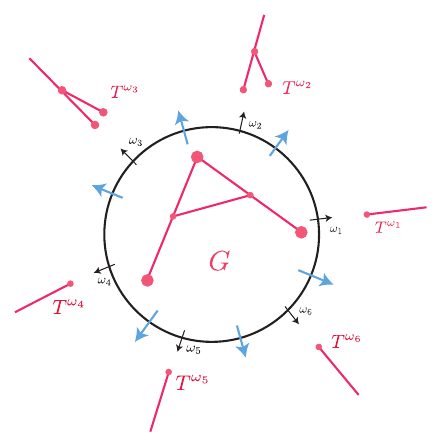}
\caption{For a given graph $G$, the critical angles are marked in blue. 
An example merge tree for each region of $\S^1$ between critical values is given. 
}
\label{fig:CriticalAngle}
\end{figure}

\begin{lemma}
\label{lem:CritAng}
Denote $\text{Crit}(G) = \{\theta_1,\theta_2,\cdots,\theta_n = \theta_0 \}$ ordered counterclockwise around the circle.
This forms a subdivision of $\S^1$ such that for any $\omega_1, \omega_2 \in (\theta_i,\theta_{i+1})$, $T^{\omega_1}$ and $T^{\omega_2}$ are combinatorially equivalent.
\end{lemma}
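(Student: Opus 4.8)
The plan is to show that as the direction $\omega$ varies continuously within an open arc $(\theta_i, \theta_{i+1})$ between consecutive critical angles, nothing about the combinatorial structure of the merge tree can change. The key observation is that the combinatorial type of $T^\omega(G)$ is determined entirely by the \emph{relative order} of the height values $f_\omega(v)$ over the vertices $v \in V(G)$, together with which edges connect which vertices (the latter being fixed, independent of $\omega$). So my strategy is to argue that this ordering is constant on each open arc, and that the only way it can change is by passing through a critical angle.

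First I would make precise the claim that the combinatorial merge tree depends only on the vertex order. For a fixed straight-line embedded graph, the merge tree records the nested structure of connected components of the sublevel sets $G_a$ as the threshold $a$ sweeps upward; since $f_\omega$ is linear on each edge, the only places where the sublevel set topology changes are at vertex heights. Thus if two directions $\omega_1, \omega_2$ induce the \emph{same total preorder} on $V(G)$ (i.e. $f_{\omega_1}(u) \le f_{\omega_1}(v) \iff f_{\omega_2}(u) \le f_{\omega_2}(v)$ for all $u,v$), the sublevel set filtrations are combinatorially identical and hence produce combinatorially equivalent merge trees. I would state this as the reduction step and verify it directly from the quotient-space construction of $T(G,f)$ in Section~\ref{ssec:mergetree}.

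Next I would connect the vertex order to the critical angles. For any pair of vertices $u, v$ adjacent via an edge $e = (u,v) \in E$, the sign of $f_\omega(u) - f_\omega(v) = \langle \phi(u) - \phi(v), \omega\rangle = \vec\omega \cdot (\vec u - \vec v)$ flips precisely when $\omega$ crosses the direction normal to $\vec u - \vec v$, which is exactly a critical angle in $\text{Crit}(G)$ by Equation~\eqref{eqn:crit}. Therefore, on any open arc $(\theta_i, \theta_{i+1})$ containing no critical angle, every such inner product $\langle \phi(u)-\phi(v),\omega\rangle$ has constant sign, so the relative order of the two endpoints of every edge is preserved throughout the arc. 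Combined with the reduction step, this gives that $T^{\omega_1}$ and $T^{\omega_2}$ are combinatorially equivalent for any $\omega_1, \omega_2$ in the same open arc.

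The main obstacle I anticipate is subtle: the combinatorial merge tree a priori depends on the total order of \emph{all} vertices, but $\text{Crit}(G)$ only records sign changes for pairs of vertices joined by an \emph{edge}. I would need to argue that the order of a \emph{non-adjacent} pair of vertices is irrelevant to the combinatorial type, since such vertices can only affect when components merge through intermediate connecting paths, and any merge event is ultimately mediated by an edge crossing a threshold. In other words, the merge/split structure is governed by when edges become ``active'' in the sublevel set, and this is controlled exactly by the per-edge orderings captured by the critical angles. Making this last point rigorous, rather than relying on the intuition that only edge-endpoint comparisons matter, is the delicate step; I would handle it by tracking, for each threshold, the induced subgraph $G_a$ and showing its component structure evolves identically across the arc whenever all edge-endpoint orderings agree.
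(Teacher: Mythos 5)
You correctly isolate the delicate step, but the bridge you propose for it is false as stated, so the proof would fail at exactly the point you flagged. The claim that the combinatorial merge tree is determined by the edge-endpoint orderings alone---equivalently, that the relative order of \emph{non-adjacent} vertices is irrelevant---has a direct combinatorial counterexample. Take $V = \{a,b,c,u,v\}$ with edges $ua, ub, uc, vb, vc$, so $u \not\sim v$, and suppose $a,b,c$ all lie below both $u$ and $v$; this fixes the orientation of every edge. If $f(u) < f(v)$, then $u$ merges the three components $\{a\},\{b\},\{c\}$ simultaneously, producing a single merge node of down-degree $3$, and $v$ is regular; if instead $f(v) < f(u)$, then $v$ first merges $\{b\}$ and $\{c\}$, and $u$ then merges $\{a\}$ with $\{b,c,v\}$, producing two binary merge nodes. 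These trees are not isomorphic up to subdivision (removing regular vertices never resolves a trivalent node into two binary ones), even though every edge-endpoint comparison agrees. So ``the component structure of $G_a$ evolves identically whenever all edge-endpoint orderings agree'' is simply not true, and no amount of careful tracking of $G_a$ can establish it. What actually needs to be shown is geometric, not order-theoretic: in a \emph{non-crossing straight-line} embedding, a swap of two merge-relevant non-adjacent vertices of the above kind cannot happen strictly inside an arc of $\text{Crit}(G)$---roughly, if $u$ and $v$ at a common height both attach by straight edges into the same two sublevel components, planarity forces either an edge crossing or an intervening critical angle. Your plan never invokes the embedding, so it cannot supply this.

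For comparison, the paper's proof takes a different decomposition: it shows the set of birth vertices and the set of merge vertices are each constant on the arc. The birth-vertex half (via Lemma~\ref{lem:extremalVertex}) genuinely reduces to per-edge sign constancy and matches your second step. But for merge vertices the paper asserts that the sublevel set just below $f_\omega(v)$ cannot change without crossing a critical angle---which elides the very same issue, since a vertex \emph{not} adjacent to $v$ can cross $v$'s level inside an arc (the set $\text{Crit}(G)$ in Equation~\eqref{eqn:crit} only records normals of edges). So your instinct that non-adjacent comparisons are the crux is sound, and indeed sharper than the paper's own treatment; but the resolution you sketch is the one step that is provably wrong as written, rather than merely in need of rigor, and closing it requires a genuinely geometric argument about the embedding.
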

\begin{proof}
We will show that the set of vertices of $G$ giving birth to a connected component in direction $\omega_1$ and $\omega_2$ are the same. 
We will do the same for the set of vertices causing a merging. 
This implies that even if the function values on the merge tree are different, the combinatorial structure of the trees will be the same. 

Assume we have a birth causing vertex $v \in V(G)$ for direction $\omega_1$. 
By Lemma \ref{lem:extremalVertex}, this implies that $v$ is an extremal vertex, and so it is not in the convex hull of its neighbors. 
As in the proof of Lemma \ref{lem:extremalVertex}, $v$ causes a birth in direction $\omega_1$ iff all neighbors lie above the hyperplane perpendicular to $\omega_1$ through $v$. 
Because passing from  $\omega_1$ to $\omega_2$ does not cross any critical angles, the neighbors of $v$ must also lie above the hyperplane perpendicular to $\omega_2$ through $v$. 
Thus $v$ is still a birth causing vertex for direction $\omega_2$. 
This implies that the set of birth causing vertices are the same. 

Then, assume that $v$ merges two connected components in the direction $\omega_1$. 
This occurs iff there are two connected components in the sublevelset 
$f_{\omega_1} \inv (f_{\omega_1}(v)-\delta)$ for any small $\eps >0$. 
This sublevelset must be the same as $f_{\omega_2} \inv (f_{\omega_2}(v)-\eps)$, otherwise there would be a critical angle between $\omega_1$ and $\omega_2$. 
Thus $v$ is a merge vertex for $\omega_1$ iff it is also a merge vertex for $\omega_2$.
\end{proof}

For an example of this lemma see Figure \ref{fig:CriticalAngle} where the critical angles are shown in blue. 
Within the regions separated by the critical angles, the tree structure remains constant. 
However, passing across a critical angle, such as from $\omega_1$ to $\omega_2$, the tree structure can change.
Note that we could still have critical $\theta_i$ across which the combinatorial structure does not change, such as from $\omega_6$ to $\omega_1$, thus the lemma implies that the set of potential changes for the merge tree structure is a subset of $\text{Crit}(G)$.
With this in hand, we see that the distance function is piecewise continuous with discontinuities given by the set of critical angles. 
\begin{theorem}
Fix two geometric graphs $\phi_1:G_1 \to \R^2$, $\phi_2:G_2\to \R^2$, and define  $F(\omega)$ to be the $L_\infty$ norm between the two matrices obtained from the merge trees computed using direction; that is, $\omega\in\S^1$
\begin{equation*}
\begin{matrix}
    F:  & \S^1 & \rightarrow & \R\\
      & \omega & \mapsto & \|M_1^\omega-M_2^\omega\|_\infty.
\end{matrix}
\end{equation*}
Then, $F(\omega)$ is piecewise continuous with the set of discontinuities given by a subset of $\text{Crit}(G)$.
\end{theorem}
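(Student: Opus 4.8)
The plan is to exploit the combinatorial stability guaranteed by Lemma~\ref{lem:CritAng} to show that each entry of the difference matrix varies continuously away from the critical angles, and then to conclude that their pointwise maximum, the $L_\infty$ norm, does too. First I would take the finite set $\text{Crit}(G_1) \cup \text{Crit}(G_2)$ of critical angles of both input graphs, observing that the edge subdivisions introduced by the labeling scheme of Section~\ref{ssec:labelingscheme} do not create any new normal directions and hence do not enlarge this set. Ordering these angles counterclockwise subdivides $\S^1$ into finitely many open arcs, so it suffices to prove that $F$ is continuous on each such arc.

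Fixing one open arc $I = (\theta_i, \theta_{i+1})$, I would apply Lemma~\ref{lem:CritAng} separately to $G_1$ and to $G_2$, so that for all $\omega \in I$ the trees $T_1^\omega$ and $T_2^\omega$ are each combinatorially equivalent to a fixed reference tree. The labels $\pi_1$ and $\pi_2$ are determined by the Euclidean embedding and are independent of $\omega$, so for each pair $(i,j)$ the vertex $\text{LCA}(\pi_k(i), \pi_k(j))$ occupies a fixed combinatorial position in this reference tree throughout $I$. The key observation is that this combinatorial position corresponds to a single fixed vertex $v_{ij}^k \in V(G_k)$: every merge-tree vertex arises from a graph vertex under the quotient construction, and Lemma~\ref{lem:CritAng} fixes the birth and merge vertices together with the ancestor relation on the arc, so the identity of the LCA vertex cannot jump as $\omega$ varies within $I$.

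With the LCA pinned to a fixed graph vertex, the matrix entry becomes an explicit continuous function of the direction, namely
\begin{equation*}
\left( M_k^\omega \right)_{ij} = f_\omega(v_{ij}^k) = \langle \phi_k(v_{ij}^k), \omega \rangle,
\end{equation*}
which, writing $\omega = (\cos\theta, \sin\theta)$, is smooth in $\theta$ on $I$. Consequently each entry of $M_1^\omega - M_2^\omega$ is continuous on $I$, and $F(\omega) = \max_{i,j} \left| \left( M_1^\omega \right)_{ij} - \left( M_2^\omega \right)_{ij} \right|$ is a finite pointwise maximum of continuous functions, hence continuous on $I$. Since $I$ was an arbitrary arc of the subdivision, $F$ is continuous everywhere except possibly on the finite set $\text{Crit}(G_1) \cup \text{Crit}(G_2)$, giving piecewise continuity with the claimed containment of the discontinuity set.

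I expect the main obstacle to be the middle step: rigorously justifying that the combinatorial equivalence supplied by Lemma~\ref{lem:CritAng}, which is stated only up to subdivision and while ignoring function values, actually tracks the LCA to one and the same graph vertex across the whole arc rather than merely asserting that the trees ``look the same.'' Making this precise amounts to recording that combinatorial equivalence preserves the ancestor--descendant partial order and the identification of essential (non-regular) tree vertices with their originating graph vertices, so that the map $(i,j) \mapsto v_{ij}^k$ is genuinely constant on $I$.
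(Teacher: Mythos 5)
Your proposal is correct and follows essentially the same route as the paper's own proof: both restrict to an open arc between consecutive critical angles, invoke Lemma~\ref{lem:CritAng} to conclude that each LCA entry is realized by a single fixed graph vertex $v$ so that $(M_k^\omega)_{ij} = f_\omega(v)$ varies continuously, and then finish by noting that the $L_\infty$ norm, as a finite pointwise maximum of continuous entries, is continuous on the arc. Your additional observations---that the subdivisions from the labeling scheme introduce no new normals into $\text{Crit}(G)$, and that the combinatorial equivalence of Lemma~\ref{lem:CritAng} must be seen to pin down the identity of the LCA vertex rather than merely the tree shape---simply make explicit steps the paper passes over briefly.
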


\begin{proof}
Let $\pi_1:[n] \to G_1$ and $\pi_2:[n] \to G_2$ be the labelings on the geometric graphs as defined in Sec.~\ref{ssec:labelingscheme}. 
Denote $v_i = \pi_1(i) \in G_1$ and $w_i = \pi_2(i) \in V(G_2)$ for all $i$. 
Now fix a pair $i,j$ and we first will show that between critical angles, every entry in $M_1^\omega-M_2^\omega$ is continuous. 

First, we will show the entry of $M_1^\omega[i,j]$ is continuous with respect to $\omega$. 
Between critical angles, by Lem.~\ref{lem:CritAng}, we know that the combinatorial tree remains the same.
This means that the vertex in the merge tree which is the LCA of $v_i$ and $v_j$ remains the same, and further that the vertex $v \in V(G)$ which causes this merge point is the same. 
Thus the entry $M_1^\omega[i,j] = f_\omega(v)$, which is continuous.
The same argument holds for $M_1^\omega[i,j]$, thus the difference between the matrices is continuous as required. 
Finally, since the $L_\infty$ norm of a matrix is continous for continuously varying entries, we have that $F$ is continuous between critical angles.
\end{proof}

\section{Implementation}
\label{sec:algo}

Here, we discuss our implementation of the labeled merge tree transform distance computation, starting with two embedded graphs as input and returning either an exact value or approximation of the final integrated $L_\infty$ norm. 
The open source code is provided in a Github repository at \href{https://github.com/elenawang93/Labeled-Merge-Tree-Transform}{github.com/elenawang93/Labeled-Merge-Tree-Transform}. 
The main idea of the procedure is as follows. 

\begin{enumerate}
    \item Compute $\text{Crit}(G) = \text{Crit}(G_1)\cup \text{Crit}(G_2)$. 
          Retain a list of \emph{key angles}, i.e., midpoints between each pair of adjacent critical angles.
    \item Create the labeling scheme of Sec.~\ref{ssec:labelingscheme} for a pair of input graphs $G_1$ and $G_2$ by determining the collection of extremal vertices of each.
    \item Build the merge tree for each key angle and for  each vertex of the tree, associate  the vertex of $G$ which determines its function value.
    \item Use this structure to compute the piecewise-continuous function for every entry in the matrix $\omega \mapsto |M_1^\omega - M_2^\omega|$.
    \item For any $\omega$, find the maximum of these entries and integrate the resulting  $L_\infty$ norm distance output to obtain the LMTT distance.
\end{enumerate}

We highlight the key algorithms in more detail while providing a run-time analysis of these sections with some commentary on decisions and future work. 
To fix notation, we assume we have graphs $G_1 = (V_1, E_1)$ and $G_2 = (V_2, E_2)$, with $n_i = |V_i|$ and $m_i = |E_i|$ for $i = 1,2$, and let $n = n_1 + n_2$ and $m = m_1 + m_2$.  As these are embedded graphs in $\R^2$, recall that Euler's formula implies $m = O(n)$.

Step 1 takes time $O(m)$ since we need only keep a list of normal vectors to all edges. 

We break up the running time computation for Step 2 into two pieces: finding the extremal vertices, and then using these to construct the labeling scheme. 
We use the \texttt{PointPolygonTest} in OpenCV \cite{opencv_library} to find the sets of extremal vertices in each graph. 
Determining if a vertex $v$ is extremal runs in $O(\deg(v))$ where $\deg(v)$ is the degree of the vertex.
Thus checking all vertices for whether they are extremal runs in time $O(\sum_{v \in V_1 \cup V_1}\deg(v)) = O(m)$. 
Let $N$ be the total number of extremal vertices in the two graphs, so that the label set is $[N]$. 
Note that $N \leq n$. 

For the labeling scheme, we take every extremal vertex in $v \in V_1$ and compute its distance to every edge $e \in E_2$. 
If the nearest point is on an edge, we subdivide the edge by adding a vertex with corresponding labels.
If it is a vertex, we add a label to the existing vertex.
We repeat this process for each extremal $v \in V_2$ to find their nearest points on $G_1$.
This requires up to $(n_1(n_2 + m_2)+n_2(n_1+ m_1)) = O(mn)$ distance calculations. 
In total, this means that Step 2 takes time $O(m + mn) = O(mn)$. 
At most, we add $N\leq n$ additional vertices and edges by adding a degree two node that is not already a vertex.

For Step 3, we compute merge trees using the UnionFind data structure to maintain connected components, following the approach in~\cite{Smirnov2020}, which takes $O(m_i\log n_i)$ steps in the worst case on a graph with $n_i$ vertices and $m_i$ edges.
This is done for  $|\text{Crit}(G)| = 2m$ directions, making a running time of $O(m^2\log n)$ for Step 3. 

Processing the merge trees to find the LCA matrix leverages Tarjan's off-line algorithm~\cite{Tarjan1979}, as implemented in NetworkX \cite{NetworkX}, resulting a running time of $O(n^2)$ in each direction.
Again because there are $2m$ directions, this means Step 4 takes time $O(mn^2)$.

Putting this together, we have that Steps 1-4 take time 
$O(m + mn + m^2 \log n + mn^2)$; using Euler's formula, this yields a total running time of $O(n^3)$.
See Fig.~\ref{fig:KDS} for an example of the output after these steps. 
In this figure, at right we have a collection of functions giving the entries for each pair of labels; i.e. for each entry in the difference matrix.
Then the goal for the final step is to determine the maximum value of these functions over $\omega$. 
We next provide two computation methods for Step 5: an exact but slower computation using kinetic data structures and a faster approximation method. 

\begin{figure}
    \centering
    \includegraphics[width = \linewidth]{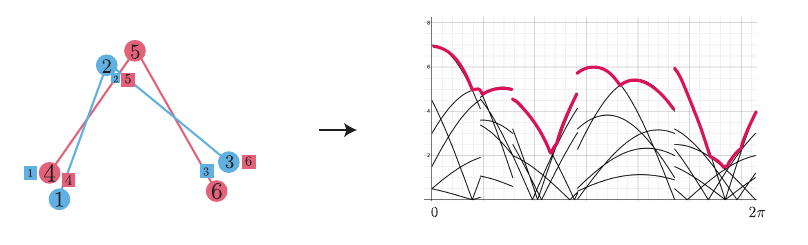}
    \caption{Maintaining maximum value for computing LMTT}
    \label{fig:KDS}
\end{figure}

\subsection{Exact Distance}
To compute the exact distance, we implemented a kinetic data structure (KDS). 
We first introduce the basic framework for a KDS, then describe how our problem fits in the framework. 
For more details on KDS, see \cite{KDSBaschGuibas, KDSGuibas1}.

A KDS maintains a system of objects $v$ that move along a known continuous \emph{flight plan} as a function of time, denoted as $v = f(t)$.
\emph{Certificates} are conditions under which the data structure is accurate, and \emph{events} track when the certificate fails and what updates need to be made correspondingly.
The certificates of the KDS form a proof of correctness of the current configuration function at all times.
Updates are stored in a priority queue, keyed by the event times.
Finally, to advance to time $t$, we process all the updates keyed by times before $t$ and pop them from the queue afterward until $t$ is smaller than the first event time in the queue. 

When computing the maximum entry of the difference matrix $\widetilde M^\omega := |M_1^\omega - M_2^\omega|$, we can construct the problem as the kinetic maximum maintenance problem.
The state-of-the-art algorithm incorporates a heap structure \cite{KDSHeapBasch, KDSHeapTree}.
Each entry of the matrix can be considered as an object moving along a flight plan with a parametrization of the form $y(\omega) = a\sin{\omega} + b\cos{\omega}$, where $a$ and $b$ are constants.
The certificate is a maximum heap that accurately represents the relations between the values of the entries in the matrix.
The certificate fails when two curves cross each other, and the failure times are the angles at which the crossings happen. 
Since all of our objects have the same frequency of $2\pi$, each pair of curves crosses exactly twice in $[0, 2\pi]$. 
Explicitly, for $v_1 = a_1\sin{\omega} + b_1\cos{\omega}$ and $v_2 = a_2\sin{\omega} + b_2\cos{\omega}$, they cross at ${\omega_1 = \arctan\left({\frac{a_1-a_2}{b_1-b_2}}\right)}$ and $\omega_2\equiv\omega_1\pmod{\pi}$. 
The $\omega$ for which any crossing happens are the queue keys in the event queue, while the queue values are the two intersecting objects.
We perform the update by swapping the locations of the two objects in the maximum heap. 
Finally, by maintaining the maximum object, we find the distance function and integrate it to obtain the final distance.
See Figure \ref{fig:KDS} for an example.

This approach is implemented in our code by maintaining a kinetic heap of $n^2$ objects between each pair of $2m$ critical angles.
Consequently, the complexity for this computation is $O(2m\cdot n^4\log n^2)=O(n^5\log n)$ \cite{KDSBaschGuibas}. 
This results in an overall complexity of Steps 1-5 of 
$O(m + mn + m^2 \log n + mn^2 + n^5\log n)$, which simplifies to $O(n^5\log n)$ for the deterministic algorithm.

This can be further optimized to $O(2m\cdot \lambda_2(n^2)\log n^2) = O(n^3\log^2 n)$ by using a kinetic hanger \cite{Agarwal2000}, which introduces randomness to the algorithm. 
Here, $\lambda_2$ is the length bound of a Davenport–Schinzel sequence, where the subscript $2$ indicates the objects are $2$-intersecting curves.  The final expected running time for Steps 1-5 is then $O(n^3\log^2 n)$ using the kinetic hanger data structure.

\subsection{Approximated Distance}
Although the previous section gives a computation for the exact distance, it leads to a practical slowdown. 
While there is code for an exact method in our repository, we use the following sampling method in practice, as it performs significantly better on our data sets.

Once we computed the LCA matrix for each key angle, we check for the maximum entry in the difference matrix, which takes time $O(n^2)$.
We sample to get $\omega$ in $K$ directions, and use the trapezoid rule to integrate. 
While the worst running time in this approach  is  $O(m + mn + m^2 \log n + mn^2 + n^2K) = O(n^3 + n^2K)$, shaving off only a $\log^2 n$ factor in the worst case, it runs significantly faster in practice compared to the implemented kinetic heap. 

We quantify the quality of this approximation by the trapezoidal rule error analysis. 
Given function $F$, sampled at $K$ evenly spaced points over the interval $[A, B]$, the error $E_T$ can be approximated by 
\[
\abs{E_T} \leq \frac{(B-A)^3}{12K^2}\max(F'').
\]
Since the LMTT distance function $F(\omega)$ is piecewise continuous and is composed of functions in the form of $y(\omega)=a \sin\omega + b \cos\omega$, we have $\max(y''(\omega))= \sqrt{a^2+b^2}$. 
Given $a = v_1 - u_1$ and $b = v_2 - u_2$ for any two vertices $v = (v_1, v_2)\in G_1$ and $u = (u_1, u_2)\in G_2$, and let $R$ be the radius of the bounding circle that contains both $G_1$ and $G_2$.
This gives $\sqrt{a^2+b^2}\leq 2R$.
Therefore, we can bound the second derivative  by $\max F''(\omega)\leq 2R$.
Hence, we have a coarse bound given by
\begin{align*}
    \abs{E_T} &\leq \frac{\max(F'')(B-A)^3}{12K^2}
    =\frac{2R(2\pi)^3}{12K^2}
    =\frac{4}{3}\cdot\frac{R\pi^3}{K^2}.
\end{align*}

\section{Applications}
\label{sec:application}

\begin{figure}
\begin{center}
    \includegraphics[width=0.8\textwidth]{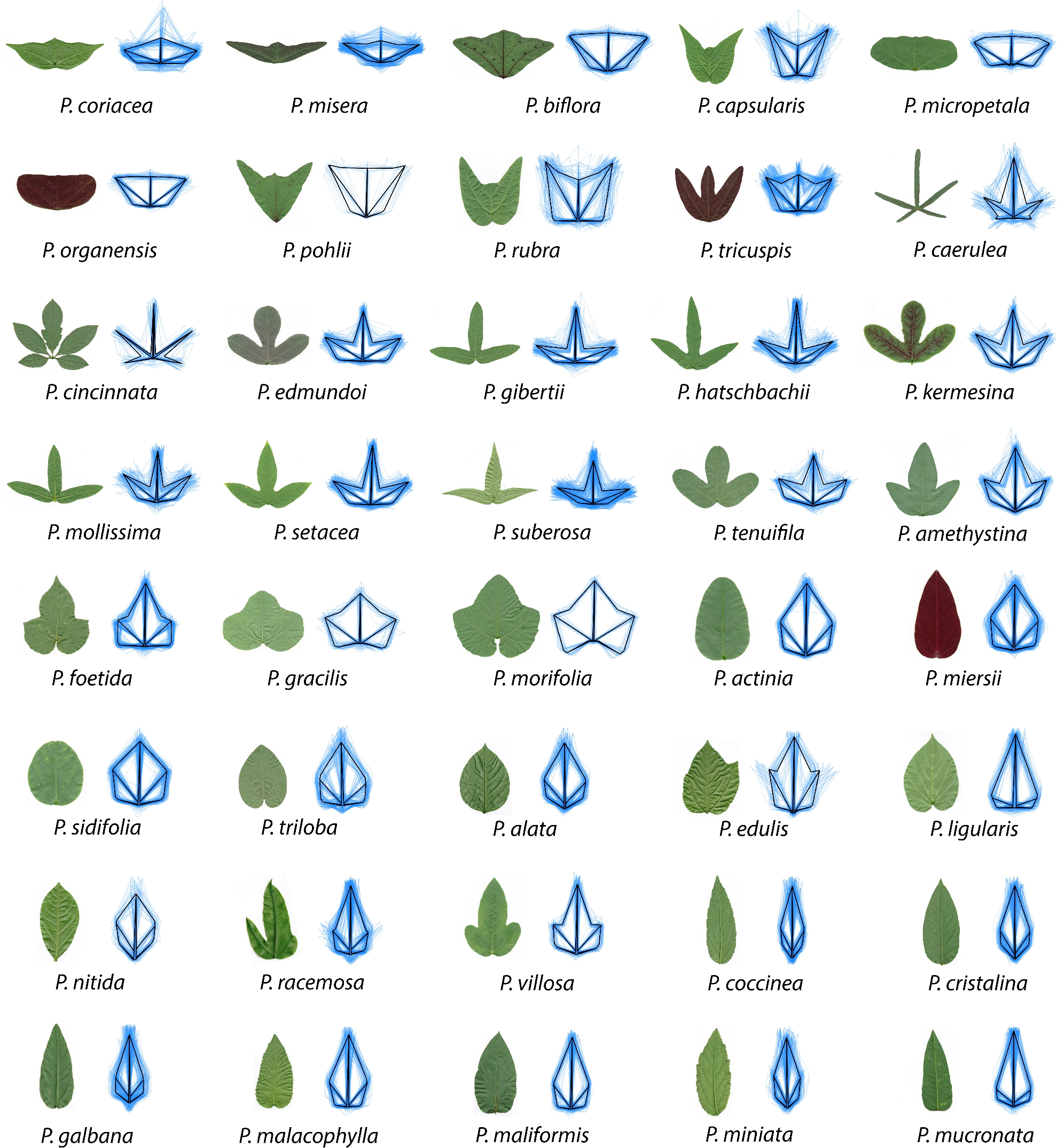}
    \caption{The shapes of \emph{Passiflora} leaves measured using landmarks \cite{Chitwood2017}. \label{Passiflora}}
\end{center}
\end{figure}
\label{sec: Application}
To test the meaning of our distance in practice and its implementation, we use it to visualize the structural differences in two different data sets.

\subsection{Isomorphic graphs: \emph{Passiflora} leaves}
The first data set we examine consists of \emph{Passiflora} leaves \cite{Chitwood2017}.
This dataset contains $3319$ leaves from 226 plants across $40$ species which were classified into seven morphotypes based on traditional morphometric techniques. 
Each entry represents a leaf, with $15$ Procrustes-aligned $(x, y)$-coordinate pairs representing the locations of $15$ landmarks. 
We construct a graph for each leaf by connecting adjacent landmarks, creating an outline that approximates the shape of the leaf. The resulting graphs consist solely of extremal vertices, making this data set an efficient use case of our metric.
See \figurename~\ref{Passiflora} for a comparison of each species' canonical leaf shape with the average graph.
Examples of the seven morphotypes for classification can be seen in the center of Fig.~\ref{fig:MDSPCA}. 

Biologically, \emph{Passiflora} leaves are of great interest to biologists due to their shape diversity while being in the same genus. 
Also of interest is the relationship between leaf shape and maturity. 
As leaves develop along the vine, their shape changes dramatically from juvenile leaves to mature leaves to aged leaves; this phenomenon is known as \emph{heteroblasty}. 
Young leaves from plants of different species often look more similar to each other than to mature leaves from the same plant. 
In order to select leaves that are the most representative of their species, we sample one leaf from each of the $226$ plants that represents the mature shape, with a heteroblasty number of around 0.5.

The results of this analysis are shown in \figurename~\ref{fig:MDSPCA}.
First, on the right of Fig.~\ref{fig:MDSPCA} we have reproduced the analysis of the leaves from \cite{Chitwood2017}. 
Treating each data point as a 30-dimensional vector (which is only possible because of the landmark labels), the authors of \cite{Chitwood2017} used Principal Component Analysis as the dimension reduction algorithm. 
While this plot shows some separation between morphotypes, the overlapping points make the visualization difficult to interpret. 

We then compute pairwise LMTT distances between all the leaves (note that this does not utilize the landmark labels in any way) and construct a $226\times 226$ distance matrix.
In order to  to visualize the relationships given by this distance, we use Multi-Dimensional Scaling (MDS) \cite{Cox2008} which is shown in the left of Fig.~\ref{fig:MDSPCA}. 

Comparing the LMTT to the PCA, we see a similar global structure but with better separation amongst points.
However, we note in particular that the LMTT comparison is done with no memory of the landmark structure, meaning we have similar results with less input information. 
We note that most of the overlapping in the MDS plot occurs amongst the leaves of morphotypes E, F, and G, which have very similar outlines as seen in the middle examples of Fig.~\ref{fig:MDSPCA}. 
Amongst leaves with sufficiently different outlines, such as type A, B, C, and D, we see the distinction reflected by the embedding.
This leads us to believe that the LMTT is a good tool for use in distinguishing shapes, so long as they are sufficiently distinct.

\begin{figure}\centering
  \includegraphics[width=\textwidth]{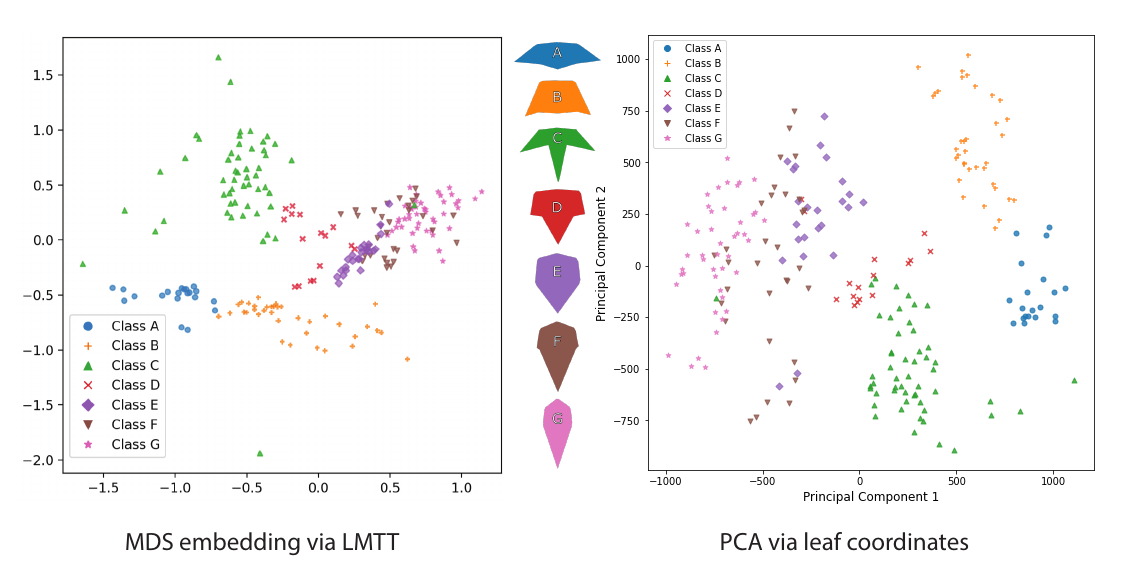}
  \caption{MDS and PCA plots with points colored by morphotype. 
  \label{fig:MDSPCA}}
\end{figure}

\subsection{Non-isomorphic graphs: letters}
The second data set we examine is a set of different letters from the IAM graph database \cite{Riesen2008}.
However, there are example letter graphs in this data set which are quite distorted; therefore, we picked the data points with only one connected component. 
Further, the dataset consists of some unrecognizable data points since they are generated from hand-written letters.
Therefore, we created a template graph for each letter to remove these outliers from the data before running the experiment.
We removed data points if they did not have the same number of vertices and the same edge list as their letter's template. 
The data set we end up with has 450 data points: 15 different letters, and 30 graphs per letter.
Each letter graph has 2 to 7 vertices and 1 to 6 edges.
We follow the same process as in the leaf shapes by computing all pairwise LMTT distances and constructing a $450\times 450$ distance matrix.

The relationship between the letters is then visualized via MDS; see Fig \ref{fig:letter150}.
We observe clear separation of clusters amongst the different letters. 
The nearby clusters also follow what we expect, given the choice of merge tree as our topological signature. 
For example, the letters ``H" and ``X" can generate similar merge trees in certain directions, and the MDS plot shows those clusters close together.
The graphs in this data set differ from the leaf example in that they have different numbers of vertices and edges; hence, they are non-isomorphic.
This shows the utility and meaningfulness of our distance on two datasets that have different isomorphism properties.

\begin{figure}
    \centering
    \begin{minipage}{0.43\textwidth}
        \includegraphics[width = \textwidth]{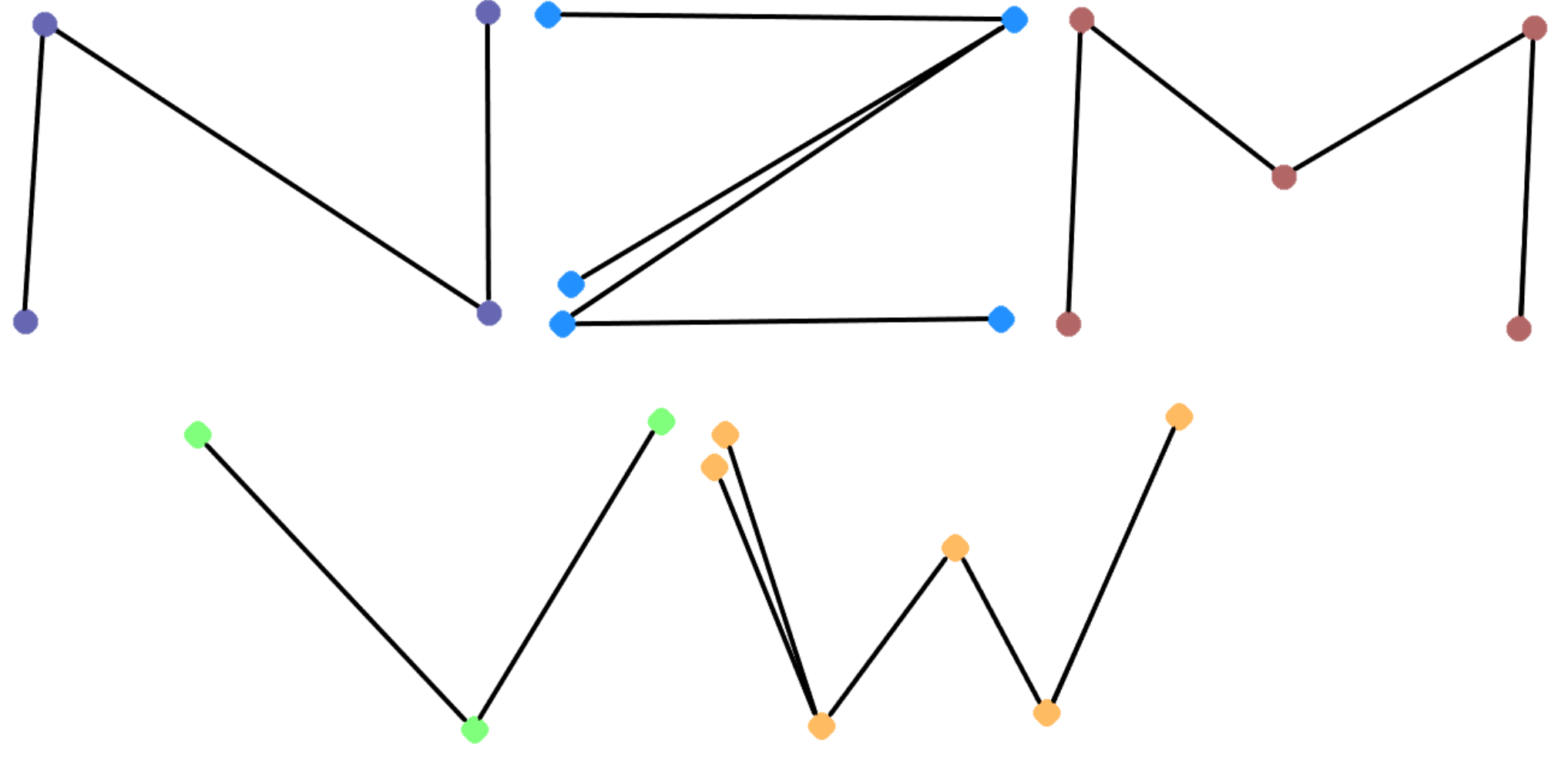}
    \end{minipage}
    \begin{minipage}{0.53\textwidth}
        \includegraphics[width = \textwidth]{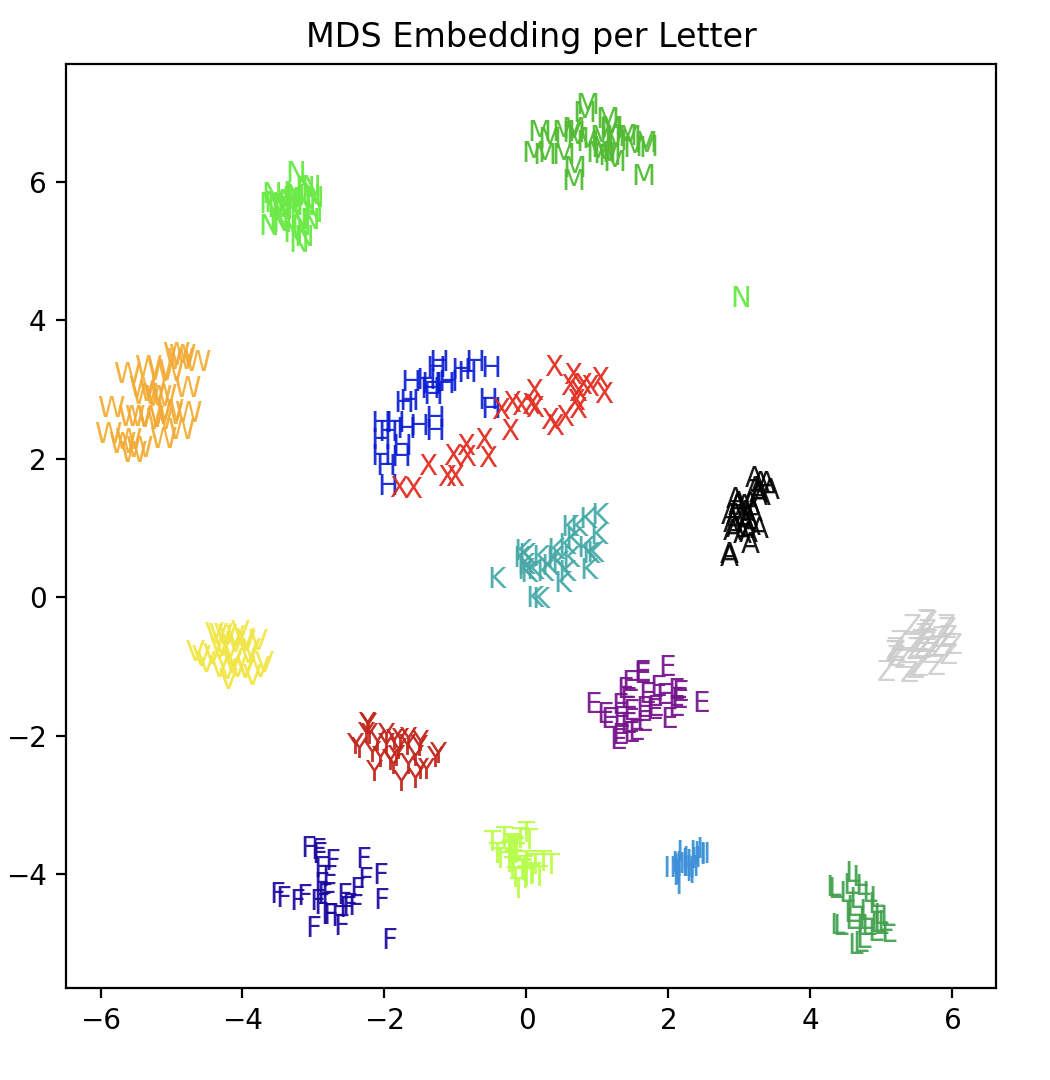}
    \end{minipage}
    \caption{At left, an example of letter graphs from the IAM data set. At right, the MDS plot labeled by letter computed using the LMTT distance.}
    \label{fig:letter150}
\end{figure}

\section{Conclusion and Discussion}
In this paper, we have given a new distance between embedded graphs by constructing a labeling procedure and computing distance based on the labeled merge tree interleaving distance. 
We proved theoretical properties of the labeled merge tree transform distance and used them for optimization in its implementation. 
We showed its utility by using it to discriminate among leaves in the \emph{Passiflora} data set and between letter graphs in the IAM data set.

Our method comes with its own strengths and limitations. 
As a strength, it is easy to compute compared to many embedded graph distances; see \cite{Buchin2023} for further details on other options. 
In the future, we aim to optimize the algorithm and implementation to allow for applications to larger data sets. 
However, while it summarizes the outer shape of the data and differentiates pairs of data points that are similar, we note that it cannot distinguish the internal structures of data. 
For example, the distance between a hexagon and hexagon with a central vertex as in Fig.\ref{fig:InternalVertex} is 0 since the merge trees are the same for the two graphs in any direction.
\begin{figure}
    \centering
    \includegraphics[width=0.7\linewidth]{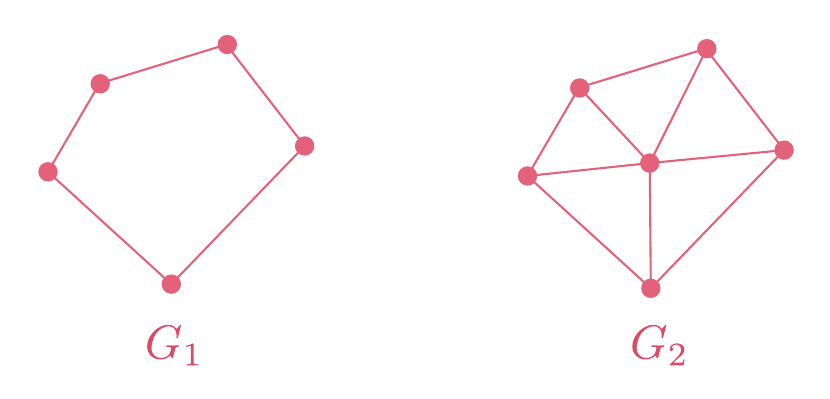}
    \caption{An example of graphs with LMTT distance 0.}
    \label{fig:InternalVertex}
\end{figure}
Another limitation of the method is that it requires the data to be pre-aligned, a difficult problem in its own right.
Taken in another light, however, the distance is based on the embedding information so we would expect that alignment of the data is taken into account when computing this distance. 

We also note that there is an instability inherent to the labelling procedure. 
See the example of Fig.~\ref{fig:LabelSwitch} where a slight edit to the location of the vertex $w_5$ from $G_2$ to $G_2'$ results in the label jumping from one edge of $G_1$ to the other. 
This occurs because the vertex is moved past the bisector of the angle at $v_1$.
For this reason, it would be interesting to see if there are other, more stable methods for labeling which can be used in the procedure while still resulting in similar distance properties.
While this example shows that stability for the entire collection of embedded graphs is unlikely, it would be interesting to see if this distance is stable for restricted sets. 
These restricted settings may also allow for stronger properties of the distance such as separatbiliy and the triangle inequality. 

\begin{figure}
    \centering
    \includegraphics[width = 0.75\linewidth]{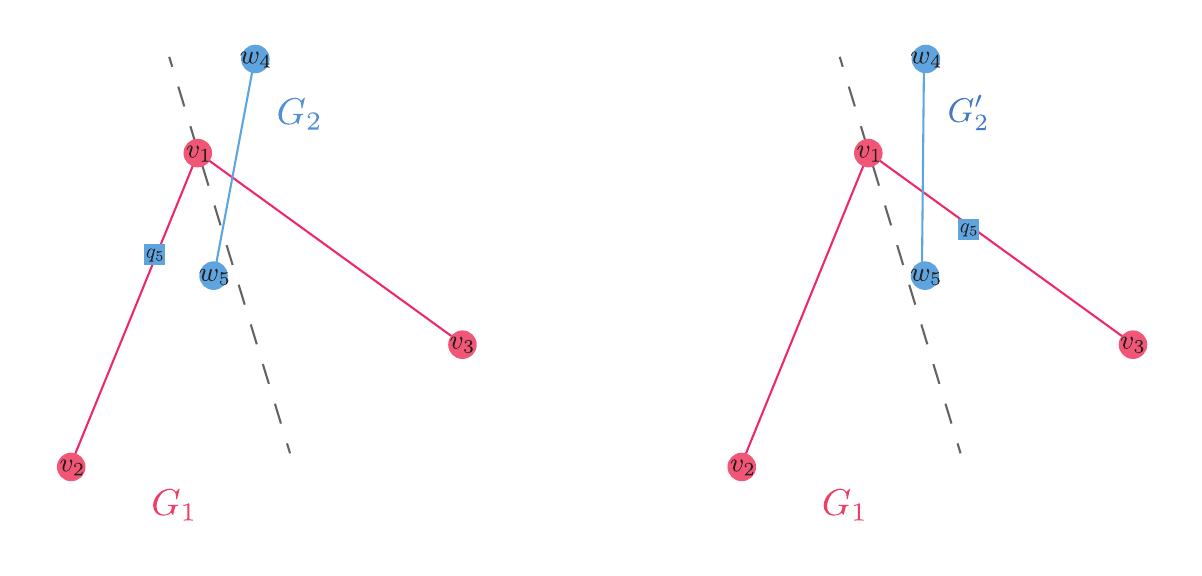}
    \caption{An example of a small change in a graph resulting in a large change in labeling locations.}
    \label{fig:LabelSwitch}
\end{figure}

Finally, in theory this definition should be applicable to graphs and/or simplicial complexes embedded in higher dimensions $\R^d$. 
The labeled merge tree transform would result in collections of matrices parameterized by $\S^{d-1}$ for higher dimensional spheres, so further care would need to be taken to understand how to track the maximum value of the distance for the integral definition.

\section*{Acknowledgements}
XW would like to thank Vin de Silva for helpful discussions.
This work was funded in part by the National Science Foundation through grants 
CCF-2106672,
CCF-1907612,
CCF-1907591,
CCF-2106578,
and CCF-2142713.

\bibliographystyle{plainurl}
\bibliography{SoCGYRF}

\begin{thebibliography}{10}

\bibitem{MergeTreeGH}
Pankaj~K. Agarwal, Emily Fox, Abhinandan Nath, Anastasios Sidiropoulos, and
  Yusu Wang.
\newblock Computing the {G}romov-{H}ausdorff distance for metric trees.
\newblock {\em ACM Trans. Algorithms}, 14(2), apr 2018.
\newblock \href {https://doi.org/10.1145/3185466} {\path{doi:10.1145/3185466}}.

\bibitem{Agarwal2000}
Pankaj~K. Agarwal and Micha Sharir.
\newblock Davenport–schinzel sequences and their geometric applications.
\newblock In J.-R. Sack and J.~Urrutia, editors, {\em Handbook of Computational
  Geometry}, pages 1--47. North-Holland, Amsterdam, 2000.
\newblock URL:
  \url{https://www.sciencedirect.com/science/article/pii/B9780444825377500024},
  \href {https://doi.org/https://doi.org/10.1016/B978-044482537-7/50002-4}
  {\path{doi:https://doi.org/10.1016/B978-044482537-7/50002-4}}.

\bibitem{BarleySeed}
Erik~J Am\'{e}zquita, Michelle~Y Quigley, Tim Ophelders, Jacob~B Landis, Daniel
  Koenig, Elizabeth Munch, and Daniel~H Chitwood.
\newblock {Measuring hidden phenotype: quantifying the shape of barley seeds
  using the {E}uler characteristic transform}.
\newblock {\em in silico Plants}, 4(1), 12 2021.
\newblock \href {https://doi.org/10.1093/insilicoplants/diab033}
  {\path{doi:10.1093/insilicoplants/diab033}}.

\bibitem{KDSBaschGuibas}
Julien Basch, Leonidas~J Guibas, and John Hershberger.
\newblock Data structures for mobile data.
\newblock {\em Journal of Algorithms}, 31(1):1--28, 1999.
\newblock URL:
  \url{https://www.sciencedirect.com/science/article/pii/S0196677498909889},
  \href {https://doi.org/https://doi.org/10.1006/jagm.1998.0988}
  {\path{doi:https://doi.org/10.1006/jagm.1998.0988}}.

\bibitem{KDSHeapBasch}
Julien Basch, Leonidas~J. Guibas, and G.~D. Ramkumar.
\newblock Sweeping lines and line segments with a heap.
\newblock In {\em Proceedings of the Thirteenth Annual Symposium on
  Computational Geometry}, SCG '97, pages 469--471, New York, NY, USA, 1997.
  Association for Computing Machinery.
\newblock \href {https://doi.org/10.1145/262839.263089}
  {\path{doi:10.1145/262839.263089}}.

\bibitem{Batakci2023}
Levent Batakci, Abigail Branson, Bryan Castillo, Candace Todd, Erin~Wolf
  Chambers, and Elizabeth Munch.
\newblock Comparing embedded graphs using average branching distance.
\newblock {\em Involve, a Journal of Mathematics}, 16(3):365–388, August
  2023.
\newblock URL: \url{http://dx.doi.org/10.2140/involve.2023.16.365}, \href
  {https://doi.org/10.2140/involve.2023.16.365}
  {\path{doi:10.2140/involve.2023.16.365}}.

\bibitem{Bauer2012}
Ulrich Bauer, Carsten Lange, and Max Wardetzky.
\newblock Optimal topological simplification of discrete functions on surfaces.
\newblock {\em Discrete {\&} Computational Geometry}, 47(2):347--377, Mar 2012.
\newblock \href {https://doi.org/10.1007/s00454-011-9350-z}
  {\path{doi:10.1007/s00454-011-9350-z}}.

\bibitem{BeltonReconstruction}
Robin~Lynne Belton, Brittany~Terese Fasy, Rostik Mertz, Samuel Micka, David~L.
  Millman, Daniel Salinas, Anna Schenfisch, Jordan Schupbach, and Lucia
  Williams.
\newblock Reconstructing embedded graphs from persistence diagrams.
\newblock {\em Computational Geometry}, 90:101658, 2020.
\newblock URL:
  \url{https://www.sciencedirect.com/science/article/pii/S0925772120300523},
  \href {https://doi.org/https://doi.org/10.1016/j.comgeo.2020.101658}
  {\path{doi:https://doi.org/10.1016/j.comgeo.2020.101658}}.

\bibitem{BELTON2020101658}
Robin~Lynne Belton, Brittany~Terese Fasy, Rostik Mertz, Samuel Micka, David~L.
  Millman, Daniel Salinas, Anna Schenfisch, Jordan Schupbach, and Lucia
  Williams.
\newblock Reconstructing embedded graphs from persistence diagrams.
\newblock {\em Computational Geometry}, 90:101658, 2020.
\newblock URL:
  \url{https://www.sciencedirect.com/science/article/pii/S0925772120300523},
  \href {https://doi.org/https://doi.org/10.1016/j.comgeo.2020.101658}
  {\path{doi:https://doi.org/10.1016/j.comgeo.2020.101658}}.

\bibitem{Bennett2}
Janine~C. Bennett, Vaidyanathan Krishnamoorthy, Shusen Liu, Ray~W. Grout,
  Evatt~R. Hawkes, Jacqueline~H. Chen, Jason Shepherd, Valerio Pascucci, and
  Peer-Timo Bremer.
\newblock Feature-based statistical analysis of combustion simulation data.
\newblock {\em IEEE Transactions on Visualization and Computer Graphics},
  17(12):1822--1831, 2011.
\newblock \href {https://doi.org/10.1109/TVCG.2011.199}
  {\path{doi:10.1109/TVCG.2011.199}}.

\bibitem{InterleavingComplexity}
H\r{a}vard~Bakke Bjerkevik and Magnus~Bakke Botnan.
\newblock {Computational Complexity of the Interleaving Distance}.
\newblock In Bettina Speckmann and Csaba~D. T\'{o}th, editors, {\em 34th
  International Symposium on Computational Geometry (SoCG 2018)}, volume~99 of
  {\em Leibniz International Proceedings in Informatics (LIPIcs)}, pages
  13:1--13:15, Dagstuhl, Germany, 2018. Schloss Dagstuhl -- Leibniz-Zentrum
  f{\"u}r Informatik.
\newblock \href {https://doi.org/10.4230/LIPIcs.SoCG.2018.13}
  {\path{doi:10.4230/LIPIcs.SoCG.2018.13}}.

\bibitem{opencv_library}
G.~Bradski.
\newblock {The OpenCV Library}.
\newblock {\em Dr. Dobb's Journal of Software Tools}, 2000.

\bibitem{Buchin2023}
Maike Buchin, Erin Chambers, Pan Fang, Brittany~Terese Fasy, Ellen Gasparovic,
  Elizabeth Munch, and Carola Wenk.
\newblock Distances between immersed graphs: Metric properties.
\newblock {\em La Matematica}, Jan 2023.
\newblock \href {https://doi.org/10.1007/s44007-022-00037-8}
  {\path{doi:10.1007/s44007-022-00037-8}}.

\bibitem{Chitwood2017}
Daniel~H Chitwood and Wagner~C Otoni.
\newblock Morphometric analysis of {P}assiflora leaves: the relationship
  between landmarks of the vasculature and elliptical fourier descriptors of
  the blade.
\newblock {\em Gigascience}, 6(1):1--13, January 2017.

\bibitem{Cox2008}
Michael A.~A. Cox and Trevor~F. Cox.
\newblock {\em Multidimensional Scaling}, pages 315--347.
\newblock Springer Berlin Heidelberg, Berlin, Heidelberg, 2008.
\newblock URL: \url{https://doi.org/10.1007/978-3-540-33037-0_14}.

\bibitem{CrawfordMonod}
Lorin Crawford, Anthea Monod, Andrew~X. Chen, Sayan Mukherjee, and Raúl
  Rabadán.
\newblock Predicting clinical outcomes in glioblastoma: An application of
  topological and functional data analysis.
\newblock {\em Journal of the American Statistical Association},
  115(531):1139--1150, 2020.
\newblock \href
  {http://arxiv.org/abs/https://doi.org/10.1080/01621459.2019.1671198}
  {\path{arXiv:https://doi.org/10.1080/01621459.2019.1671198}}, \href
  {https://doi.org/10.1080/01621459.2019.1671198}
  {\path{doi:10.1080/01621459.2019.1671198}}.

\bibitem{Curry2018HowMD}
Justin Curry, Sayan Mukherjee, and Katharine Turner.
\newblock How many directions determine a shape and other sufficiency results
  for two topological transforms.
\newblock {\em Transactions of the American Mathematical Society, Series B},
  2018.

\bibitem{KDSHeapTree}
Guilherme~D. {da Fonseca} and Celina~M.H. {de Figueiredo}.
\newblock Kinetic heap-ordered trees: Tight analysis and improved algorithms.
\newblock {\em Information Processing Letters}, 85(3):165--169, 2003.
\newblock URL:
  \url{https://www.sciencedirect.com/science/article/pii/S0020019002003666},
  \href {https://doi.org/https://doi.org/10.1016/S0020-0190(02)00366-6}
  {\path{doi:https://doi.org/10.1016/S0020-0190(02)00366-6}}.

\bibitem{deSilva2018}
Vin {de Silva}, Elizabeth Munch, and Anastasios Stefanou.
\newblock Theory of interleavings on categories with a flow.
\newblock {\em Theory and Applications of Categories}, 33(21):583--607, 2018.
\newblock URL: \url{http://www.tac.mta.ca/tac/volumes/33/21/33-21.pdf}.

\bibitem{DeyWang2021}
Tamal~K. Dey and Yusu Wang.
\newblock {\em Computational Topology for Data Analysis}.
\newblock Cambridge University Press, Cambridge, England, 2022.

\bibitem{EdelsbrunnerPH}
Herbert Edelsbrunner and John Harer.
\newblock Persistent homology—a survey.
\newblock {\em Discrete \& Computational Geometry - DCG}, 453, 01 2008.
\newblock \href {https://doi.org/10.1090/conm/453/08802}
  {\path{doi:10.1090/conm/453/08802}}.

\bibitem{MergetreeInterleaving}
Ellen Gasparovic, Elizabeth Munch, Steve Oudot, Katharine Turner, Bei Wang, and
  Yusu Wang.
\newblock Intrinsic interleaving distance for merge trees.
\newblock {\em CoRR}, abs/1908.00063, 2019.
\newblock URL: \url{http://arxiv.org/abs/1908.00063}, \href
  {http://arxiv.org/abs/1908.00063} {\path{arXiv:1908.00063}}.

\bibitem{Ghrist2018}
Robert Ghrist, Rachel Levanger, and Huy Mai.
\newblock Persistent homology and {Euler} integral transforms.
\newblock {\em Journal of Applied and Computational Topology}, 2(1):55--60, Oct
  2018.
\newblock \href {https://doi.org/10.1007/s41468-018-0017-1}
  {\path{doi:10.1007/s41468-018-0017-1}}.

\bibitem{Goodman2018}
Jacob~E. Goodman, Joseph O'Rourke, and Csaba~D. Toth, editors.
\newblock {\em Handbook of discrete and computational geometry}.
\newblock Discrete mathematics and its applications. CRC Press, Boca Raton,
  third edition edition, 2018.

\bibitem{KDSGuibas1}
Leonidas~J. Guibas.
\newblock Kinetic data structures: a state of the art report.
\newblock In {\em Proceedings of the Third Workshop on the Algorithmic
  Foundations of Robotics on Robotics: The Algorithmic Perspective: The
  Algorithmic Perspective}, WAFR '98, page 191–209, USA, 1998. A. K. Peters,
  Ltd.

\bibitem{NetworkX}
Aric Hagberg, Pieter~J. Swart, and Daniel~A. Schult.
\newblock Exploring network structure, dynamics, and function using networkx.
\newblock 1 2008.
\newblock URL: \url{https://www.osti.gov/biblio/960616}.

\bibitem{Jardine1977}
Nicholas Jardine and Robin Sibson.
\newblock {\em Mathematical taxonomy}.
\newblock @Wiley series in probability and mathematical statistics. Wiley,
  London [u.a.], repr. edition, 1977.

\bibitem{PascucciMergeTree}
Aaditya~G. Landge, Valerio Pascucci, Attila Gyulassy, Janine~C. Bennett,
  Hemanth Kolla, Jacqueline Chen, and Peer-Timo Bremer.
\newblock In-situ feature extraction of large scale combustion simulations
  using segmented merge trees.
\newblock In {\em SC '14: Proceedings of the International Conference for High
  Performance Computing, Networking, Storage and Analysis}, pages 1020--1031,
  2014.
\newblock \href {https://doi.org/10.1109/SC.2014.88}
  {\path{doi:10.1109/SC.2014.88}}.

\bibitem{Mascarenhas2011}
Ajith Mascarenhas, Ray~W. Grout, Peer-Timo Bremer, Evatt~R. Hawkes, Valerio
  Pascucci, and Jacqueline~H. Chen.
\newblock {\em Topological Feature Extraction for Comparison of Terascale
  Combustion Simulation Data}, pages 229--240.
\newblock Springer Berlin Heidelberg, Berlin, Heidelberg, 2011.
\newblock \href {https://doi.org/10.1007/978-3-642-15014-2_19}
  {\path{doi:10.1007/978-3-642-15014-2_19}}.

\bibitem{DmitriyMergeTree}
Dmitriy Morozov, Kenes Beketayev, and Gunther Weber.
\newblock Interleaving distance between merge trees.
\newblock {\em Discrete and Computational Geometry}, 49:22--45, 01 2013.

\bibitem{Munch2023}
Elizabeth Munch.
\newblock An invitation to the {E}uler characteristic transform.
\newblock October 2023.
\newblock \href {http://arxiv.org/abs/2310.10395} {\path{arXiv:2310.10395}},
  \href {https://doi.org/10.48550/ARXIV.2310.10395}
  {\path{doi:10.48550/ARXIV.2310.10395}}.

\bibitem{Munch2019}
Elizabeth Munch and Anastasios Stefanou.
\newblock The $l_{\infty}$-cophenetic metric for phylogenetic trees as an
  interleaving distance.
\newblock In Ellen Gasparovic and Carlotta Domeniconi, editors, {\em Research
  in Data Science}, pages 109--127. Springer International Publishing, Cham,
  2019.
\newblock \href {https://doi.org/10.1007/978-3-030-11566-1_5}
  {\path{doi:10.1007/978-3-030-11566-1_5}}.

\bibitem{Oudot2015}
Steve~Y Oudot.
\newblock {\em Persistence theory: from quiver representations to data
  analysis}, volume 209 of {\em Mathematical Surveys and Monographs}.
\newblock American Mathematical Soc., Providence, Rhode Island, 2017.

\bibitem{Reeb}
G.~Reeb.
\newblock Sur les points singuliers d'une forme de {P}faff completement
  integrable ou d'une fonction numerique [on the singular points of a
  completely integrable {P}faff form or of a numerical function].
\newblock {\em Comptes Rendus Acad. Sciences Paris}, 222:847--849, 1946.
\newblock URL: \url{https://cir.nii.ac.jp/crid/1571417125676878592}.

\bibitem{Riesen2008}
Kaspar Riesen and Horst Bunke.
\newblock {\em IAM Graph Database Repository for Graph Based Pattern
  Recognition and Machine Learning}, pages 287--297.
\newblock Springer Berlin Heidelberg, 2008.
\newblock \href {https://doi.org/10.1007/978-3-540-89689-0_33}
  {\path{doi:10.1007/978-3-540-89689-0_33}}.

\bibitem{Schapira1995}
P.~Schapira.
\newblock Tomography of constructible functions.
\newblock In {\em Applied Algebra, Algebraic Algorithms and Error-Correcting
  Codes}, pages 427--435. Springer Berlin Heidelberg, 1995.
\newblock \href {https://doi.org/10.1007/3-540-60114-7_33}
  {\path{doi:10.1007/3-540-60114-7_33}}.

\bibitem{SCHAPIRA199183}
Pierre Schapira.
\newblock Operations on constructible functions.
\newblock {\em Journal of Pure and Applied Algebra}, 72(1):83--93, 1991.
\newblock URL:
  \url{https://www.sciencedirect.com/science/article/pii/002240499190131K},
  \href {https://doi.org/https://doi.org/10.1016/0022-4049(91)90131-K}
  {\path{doi:https://doi.org/10.1016/0022-4049(91)90131-K}}.

\bibitem{ECTTumorShboul}
Zeina~A. Shboul, Mahbubul Alam, Lasitha Vidyaratne, Linmin Pei, Mohamed~I.
  Elbakary, and Khan~M. Iftekharuddin.
\newblock Feature-guided deep radiomics for glioblastoma patient survival
  prediction.
\newblock {\em Frontiers in Neuroscience}, 13, 2019.
\newblock URL:
  \url{https://www.frontiersin.org/articles/10.3389/fnins.2019.00966}, \href
  {https://doi.org/10.3389/fnins.2019.00966}
  {\path{doi:10.3389/fnins.2019.00966}}.

\bibitem{Smirnov2020}
Dmitriy Smirnov and Dmitriy Morozov.
\newblock Triplet merge trees.
\newblock In Hamish Carr, Issei Fujishiro, Filip Sadlo, and Shigeo Takahashi,
  editors, {\em Topological Methods in Data Analysis and Visualization V},
  pages 19--36, Cham, 2020. Springer International Publishing.

\bibitem{Tang2022}
Wai~Shing Tang, Gabriel~Monteiro da~Silva, Henry Kirveslahti, Erin Skeens, Bibo
  Feng, Timothy Sudijono, Kevin~K Yang, Sayan Mukherjee, Brenda Rubenstein, and
  Lorin Crawford.
\newblock A topological data analytic approach for discovering biophysical
  signatures in protein dynamics.
\newblock {\em PLoS Comput. Biol.}, 18(5):e1010045, May 2022.

\bibitem{Tarjan1979}
Robert~Endre Tarjan.
\newblock Applications of path compression on balanced trees.
\newblock {\em Journal of the ACM}, 26(4):690–715, October 1979.
\newblock URL: \url{http://dx.doi.org/10.1145/322154.322161}, \href
  {https://doi.org/10.1145/322154.322161} {\path{doi:10.1145/322154.322161}}.

\bibitem{PHT}
Katharine Turner, Sayan Mukherjee, and Doug~M. Boyer.
\newblock {Persistent homology transform for modeling shapes and surfaces}.
\newblock {\em Information and Inference: A Journal of the IMA}, 3(4):310--344,
  12 2014.
\newblock \href
  {http://arxiv.org/abs/https://academic.oup.com/imaiai/article-pdf/3/4/310/2152818/iau011.pdf}
  {\path{arXiv:https://academic.oup.com/imaiai/article-pdf/3/4/310/2152818/iau011.pdf}},
  \href {https://doi.org/10.1093/imaiai/iau011}
  {\path{doi:10.1093/imaiai/iau011}}.

\bibitem{Yan2021}
Lin Yan, Talha~Bin Masood, Raghavendra Sridharamurthy, Farhan Rasheed, Vijay
  Natarajan, Ingrid Hotz, and Bei Wang.
\newblock Scalar field comparison with topological descriptors: Properties and
  applications for scientific visualization.
\newblock {\em Computer Graphics Forum}, 40(3):599--633, jun 2021.
\newblock \href {https://doi.org/10.1111/cgf.14331}
  {\path{doi:10.1111/cgf.14331}}.

\bibitem{Yan2020}
Lin Yan, Yusu Wang, Elizabeth Munch, Ellen Gasparovic, and Bei Wang.
\newblock A structural average of labeled merge trees for uncertainty
  visualization.
\newblock {\em IEEE Transactions on Visualization and Computer Graphics},
  26(1):832–842, January 2020.
\newblock URL: \url{http://dx.doi.org/10.1109/TVCG.2019.2934242}, \href
  {https://doi.org/10.1109/tvcg.2019.2934242}
  {\path{doi:10.1109/tvcg.2019.2934242}}.

\bibitem{Zhang2021}
Yanping Zhang, Jing Peng, Xiaohui Yuan, Lisi Zhang, Dongzi Zhu, Po~Hong, Jiawei
  Wang, Qingzhong Liu, and Weizhen Liu.
\newblock {MFCIS}: an automatic leaf-based identification pipeline for plant
  cultivars using deep learning and persistent homology.
\newblock {\em Horticulture Research}, 8(1):172, Aug 2021.
\newblock \href {https://doi.org/10.1038/s41438-021-00608-w}
  {\path{doi:10.1038/s41438-021-00608-w}}.

\end{thebibliography}

\end{document}